%% file: main.tex
\documentclass[submission,copyright,creativecommons]{eptcs}
\providecommand{\event}{ICE 2026} 

\usepackage{iftex}

\ifpdf
  \usepackage{underscore}         
  \usepackage[T1]{fontenc}        
\else
  \usepackage{breakurl}           
\fi

\usepackage{pdfx}
\usepackage{listings}
\usepackage{graphicx}
\usepackage{euscript}
\usepackage{tabularx}
\usepackage{xurl}

\usepackage{subcaption}
\usepackage{multirow}
\usepackage{amsthm}
\usepackage{amsfonts}
\usepackage{amssymb}
\usepackage{amsmath}
\usepackage{booktabs}
\usepackage{blkarray, bigstrut}

\input{macros.tex}

\newtheorem{definition}{Definition}[section]

\newtheorem{property}{Property}[section]

\newtheorem{proposition}{Proposition}[section]

\newtheorem{claim}{Claim}[section]

\begin{document}

\title{A Practical Approach To Verifying Structural Invariants\\In Colored Petri Nets}


\author{Lorenzo Capra
\institute{
{Department of Informatics}\\
    {Universit{\`a} degli Studi di Milano}, Italy}}



\def\titlerunning{Practical Verification Of Structural Invariants In CPN}
\def\authorrunning{L. Capra}


\maketitle

\begin{abstract}
Structural analysis is a core method in Petri Net (PN) research, offering a perspective complementary to state-space techniques while avoiding many of their limitations. It is well studied for classical PNs but far less for High-Level Petri Nets (HLPN). Symmetric Nets (SN), a common HLPN formalism, use compact annotations to encode behavioral symmetries, enabling the construction of a symbolic reachability graph (and a lumped Markov chain in stochastic SN) and the execution of symbolic discrete-event simulations.

During the past two decades, structural techniques tailored to SN have been developed, notably supported by the \SNex\ tool. This tool implements a formal calculus designed for the computation of symbolic structural relations, including, but not limited to, conflict relations and causal dependencies. Here, we focus on using this calculus to verify semi-automatically symbolic structural invariants, a task currently feasible only for certain restricted SN subclasses. We focus specifically on (semi)flows and briefly discuss an approach through which a flow generative family can be generated, at least theoretically. We further briefly outline a framework for the formal verification of a broader class of invariant properties. An extended formulation of the SN formalism is employed, which demonstrably satisfies the closure property with respect to fundamental functional operators. The core concepts are elucidated by means of representative examples throughout the exposition.

\end{abstract}



\section{Introduction And Related Work}
\label{intro}
\input{introduction}
\input{SNdefinition}


\section{Symbolic (Semi)Flows: Basic Definitions And Properties}
\label{sec:semiflow-def}
As stated above, hereinafter we implicitly consider constant-size functions. 
Let \vect{H} be the $|P |\times |T|$ incidence matrix of an SN: $\vect{H}[p,t] = \Out{p,t} - \Inp{p,t}$. The symbol '$\cdot$' denotes the usual product of function matrices based on the composition of functions. Assume that places and transitions are indexed; we can use the notation $\vect{H}[i,j]$ instead of $\vect{H}[p_i,t_j]$

\begin{definition}[P-(semi)flow] Let $\vect{I}_P$ be a non-null $1 \times |P|$ vector of functions such that $\forall p$, $\vect{I}_P[p] \in \fset{\cdom(p)}{\Gbag{\cdom'}}$, and $\vect{I}_P[p] \not\equiv 0 \Rightarrow \cdom' \leq \cdom(p)$.\\
$\vect{I}_P$ is a P-flow (P-invariant) if and only if $\vect{I}_P \cdot \vect{H} \equiv \vect{0}$.\\
A $P$-flow such that $\forall p$, and $\vect{I}_P[p] \in \fset{\cdom(p)}{\Bag{\cdom'}}$ is called $P$-semiflow.
\label{def:P-semif}
\end{definition}

A (semi)flow with minimal support (that is, which cannot be obtained as a linear combination of others) is said to be minimal. We expressly refer to minimal flows in the following.

Although general flows may reveal interesting properties, in our examples, we mostly focus on semiflows, which have an intuitive interpretation.
If $\vect{m}_0$ is the initial SN marking and $\vect{I}_P$ is a semiflow, then for each reachable marking $\vect{m}$, $\vect{I}_P \cdot \vect{m} = \vect{I}_P \cdot \vect{m}_0$. 
From a computational point of view, there is no difference.

Analogously, T-flows are "right" solutions of a homogeneous system that has $\vect{H}$ as a symbolic coefficient matrix. Let us focus on semiflows.

\begin{definition}[T-semiflow] Let $\vect{I}_T$ be a non-null $|T| \times 1$ vector of functions such that $\vect{I}_T[t] \in \fset{\cdom'}{\Bag{\cdom(t)}}$ and $\vect{I}_T[t] \not\equiv 0 \Rightarrow \cdom' \leq \cdom(t)$.\\
$\vect{I}_T$ is a T-semiflow if and only if $\vect{H} \cdot \vect{I}_T   \equiv \vect{0}$.
\label{def:T-semif}
\end{definition}

The T-semiflows also admit an intuitive interpretation. Let $\vect{I}_T$ be a T-semiflow, and let $\vect{A}$ belong to the image of $\vect{I}_T$ ($\vect{A}[t] \in \cdom(t)$); then, from each marking $\vect{m}$, any sequence of transition instances conforming to $\vect{A}$ leads back to $\vect{m}$.

These symbolic flows represent invariant properties of both quantitative and qualitative nature with respect to the entire ESN. By the way, it is also possible to limit the flow calculation to a specific subnet.
Their definitions differ in subtle ways from earlier formulations (e.g. \cite{Eva2007}), which construct a common minimal super-domain by inserting auxiliary “dummy" functions $All$ into arc functions, thus producing a model that is equivalent with respect to bisimulation criteria. 
Using the largest sub-domains (our choice) makes it easier to interpret symbolic flows and relate them to the conventional flows of the unfolding.

\subsection{Properties}
Under the implicit assumption that all functions we consider have a constant size, an intuitive but pertinent property establishes a connection between symbolic and conventional flows. 
\begin{definition}
    \label{def:skel}
The skeleton of an ESN $\EuScript{N}= (P, T, I, O, \ldots)$ is a PT net $(P, T, F)$.
$F : P \times T \cup T \times P \rightarrow \Nat$ is such that $F(p,t) = size(\Inp{p,t})$, $F(t,p) = size(\Out{p,t})$.  
\end{definition}

\begin{property}
\label{propr:skel}
Let $\EuScript{N}_{skel}$ be the skeleton of the ESN $\EuScript{N}= (P, T, \ldots)$.  If the $1 \times |P|$ vector $\vect{I}_P$ is a symbolic P-flow for $\EuScript{N}$ (Def. \ref{def:P-semif}) then
the derived vector in $\Int$ $\vect{I}'_P$ such that $\vect{I}'_P[p] = size(\vect{I}_P[p])$ is a conventional P-flow for $\EuScript{N}_{skel}$.
\end{property}

An analogous connection exists between the symbolic T-flows of $\EuScript{N}$ and those of $\EuScript{N}_{skel}$. 
This property enables a substantial reduction of the search space for candidate flows. A potential application of this result to the determination of a generative basis of semiflows will be briefly addressed in Section \ref{sec:basis}. 

Let \vect{H} be the incidence matrix of $\EuScript{N}$. Its transpose $\vect{H}^t$ is a $|T| \times |P|$ matrix such that $\vect{H}^t[i, j] = \vect{H}[j,i]^t$ (this operation is different from the conventional matrix transposition). Analogously, $\vect{I}_P^t$ is a vector $|P| \times |1|$ such that $\vect{I}_P^t[j] = \vect{I}_P[j]^t$.
The following property states that there exist two equivalent methodologies to validate candidate flows. This can be convenient from a computational perspective. An analogous property is valid for T-flows.

\begin{property}
\label{propr:eqT-P-flows}
Let $\vect{I}_P$ be a non-null $1 \times |P|$ vector. Then
$$\vect{I}_P \cdot \vect{H} \equiv \vect{0}
\Leftrightarrow  \vect{H}^t \cdot \vect{I}_P^t \equiv \vect{0}$$
\end{property}

This property also sets a relationship between the P and T-flows. A P-flow of $\EuScript{N}$ is a T-flow for the dual net $\EuScript{N}^d$, obtained by swapping places with transitions (reverting to the arc directions) and replacing the original arc functions with their transposes. And vise versa, a T-flow of $\EuScript{N}$ is a P-flow of $\EuScript{N}^d$.

\begin{figure}[!h]
\begin{center}
\includegraphics[width=0.8\textwidth]{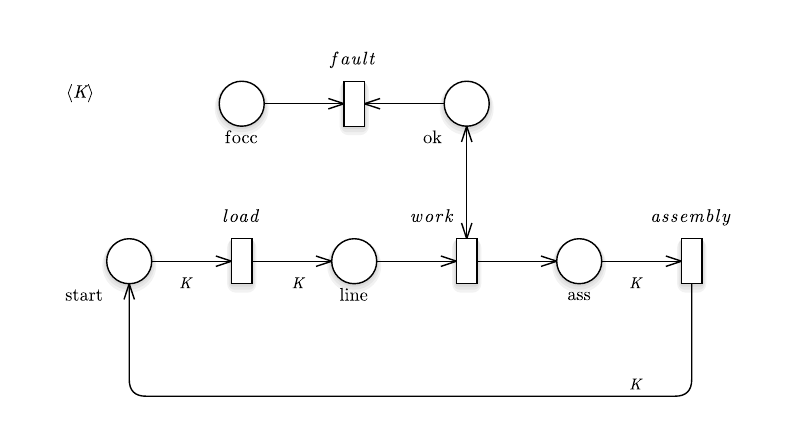}
\caption{Skeleton of the PL model (Fig. \ref{fig:PL}) }
\label{fig:skeleton}
\end{center}
\end{figure}

\section{Semi-Flows Verification: A Few Examples}
\label{sec:semiflow-ver}

In this section, we validate candidate P and T-semiflows using several examples. Each of these examples outlines specific aspects of the ESN syntax. 
Two are drawn from the literature, while the others have been constructed for illustrative purposes. The calculations were carried out semi-automatically with the support of \SNex. In particular, the current implementation still does not natively support the composition in $\lang$. Nevertheless, one can simulate this composition via that in $\overline{\lang}$, provided the arc functions are (re)expressed as pairwise disjoint linear combinations $\sum_i \lambda_i F_i$ where terms $F_i$ are guarded tuples that map to multisets in which all multiplicities are equal to one (which is the natural choice of modelers). 
The single calculation steps take negligible time.
All models were edited using the GreatSPN package \cite{GreatSPN}, which also supports the unfolding of the SN and the computation of conventional (numerical) semiflows. A translator from the \GreatSPN\ format (\url{https://github.com/greatspn/SOURCES}) to the \SNex\ format can be downloaded from the homepage.

\paragraph{Example 1}
Fig. \ref{fig:PLflows} presents the  incidence matrix of the SN representing a distributed production line (Fig. \ref{fig:PL}) together with potential (semi)flows, according to property \ref{propr:skel} (with the T-vector also displayed as a row hereinafter); The SN skeleton is shown in Fig. \ref{fig:skeleton}.

These candidate symbolic flows can be efficiently validated using the \SNex\ command-line interface (CLI). The color domains associated with the transitions, as well as the domains of the functions, are explicitly represented. It should be noted that the same symbol (e.g., $\tuple{l}$) can denote functions defined in different domains.

\begin{figure}[!ht]
\[
\begin{blockarray}{ccccc}
 & load_{PL} & work_{PL,L} & assembly_{PL} & fault_{PL,L} \\
\begin{block}{c[cccc]}
\p{start} & -\tuple{All_L} &  & \tuple{All_L} & \bigstrut[t] \\
\p{line} & \tuple{pl, All_L} & -\tuple{pl, l} & &  \\
\p{ass} &  & \tuple{pl, l} & -\tuple{pl, All_L} & \\
\p{ok} & & & & -\tuple{pl, l}\\
\p{focc} & & & & -\tuple{pl}\bigstrut[b]\\
\end{block}
\end{blockarray}\vspace*{-1.25\baselineskip}
\]

\[
\begin{blockarray}{cccccc}
& \p{start} & \p{line} & \p{ass} & \p{ok} & \p{focc} \\
\begin{block}{c[ccccc]}
\vect{I}^1_P  & \tuple{l}_{L} & \tuple{l}_{PL,L} & \tuple{l}_{PL,L} &  &  \\
\end{block}
\begin{block}{c[ccccc]}
\vect{I}^{2}_P & &  & & -\tuple{pl}_{PL,L} & \tuple{pl}_{PL} \\
\end{block}
\end{blockarray}\vspace*{-1.3\baselineskip}
\]

\vspace*{0.35\baselineskip}
\[
\begin{blockarray}{ccccc}
& load_{PL} & work_{PL,L} & assembly_{PL} & fault_{PL,L} \\
\begin{block}{c[cccc]}
\vect{I}_T  & \tuple{pl}_{PL} & \tuple{pl, All_L}_{PL} & \tuple{pl}_{PL} &    \\
\end{block}
\end{blockarray}\vspace*{-1.3\baselineskip}
\]

\caption{Incidence Matrix and P- and T-semiflows of SN in Fig. \ref{fig:PL}}
\label{fig:PLflows}
\end{figure}

Semiflow $\vect{I}^1_P$ admits a straightforward interpretation. It indicates that the distributed PLs (the lower part of the figure) are conservative with respect to the color class $L$: for example, if the initial marking contains a multiple of $K$ distinct tokens (representing workpieces) in place $\p{start}$, this quantity is preserved throughout execution. 
$\vect{I}^2_P$ is the only instance of a flow that does not match a semiflow reported in the paper. This implies that the $PL$ component associated with the place $\p{ok}$ decreases consistently with the marking of the place $\p{focc}$.
Because the SN places are covered by (semi)flows, the SN is structurally bounded (and color-safe, provided that the initial marking is color-safe). 

Even more interestingly, the T-semiflow identifies a production \emph{cycle} within a given PL, represented by the symbol $pl$: this cycle consists of an occurrence of $load$, $K$ occurrences of $work$, and an occurrence of $assembly$.
Observe that these symbolic semiflows are intrinsically parametric, since no multiplicities appear either in $\vect{H}$ or in the semiflow expressions.

We now briefly analyze the relationship between symbolic and conventional (semi)flows, with reference to the unfolding of Fig. \ref{fig:PL} reported in the Appendix (see Fig. \ref{fig:unfolding}). 
Consider first $\vect{I}^1_P$. We note that the components associated with the places $\p{line}$ and $\p{ass}$ correspond to the projection onto class $L$ of the color domain $PL \times L$. Consequently, for a given color of $L$ bound to $l$, there exist as many conventional semiflows as there are possible combinations of $PL$-colors in these two places. This yields a total of $N^2 \cdot K$ conventional semiflows that instantiate $\vect{I}^1_P$. 
For example, when $N = 2$ and $K = 2$, we obtain the eight P-semiflows shown in the Appendix (Fig. \ref{fig:unfolding}).

The symbolic T-semiflow, in turn, corresponds to \(N\) conventional semiflows, each associated with a distinct binding of \(pl\) (see Fig.~\ref{fig:unfolding}).
\begin{figure}[!t]
    \centering
    \includegraphics[width=\linewidth]{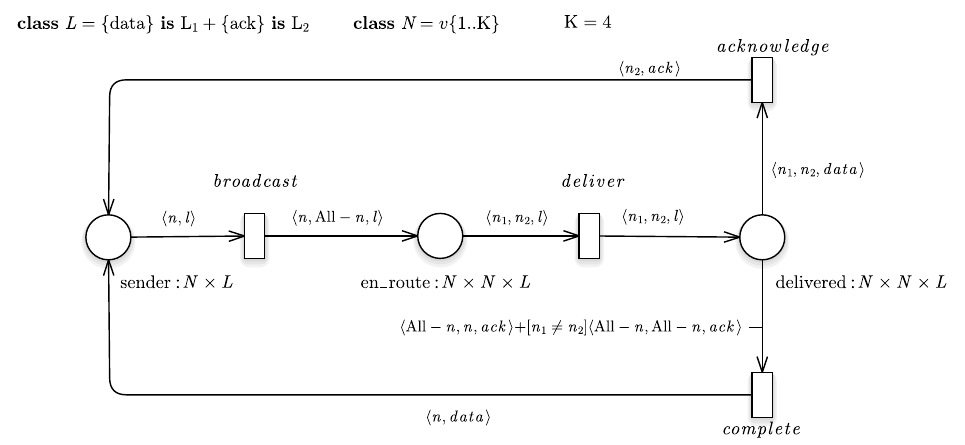}
    \caption{An ESN representing a broadcast protocol ($ack \equiv All_{L2}$, $data \equiv All_{L1}$)}
    \label{fig:broadcast}

\includegraphics[width=0.9\textwidth]{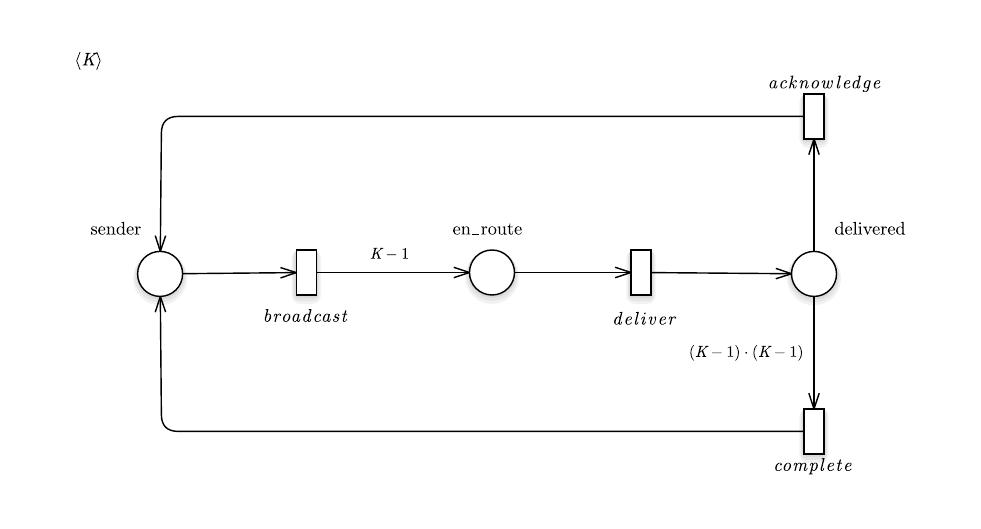}
\caption{Skeleton of the broadcast model (Fig. \ref{fig:broadcast})}
\label{fig:skeletonbroad}
\end{figure}

\begin{figure}[!h]
\centering
\[
\begin{blockarray}{ccccc}
 & broadcast_{N,L} & deliver_{N,N,L} & acknowledge_{N,N} & complete_{N} \\
\begin{block}{c[cccc]}
\p{sender} & -\tuple{n,l} &  & \tuple{n_2,ack} & \tuple{n,data} \bigstrut[t] \\
\p{en\_route} & \tuple{n, All - n, l} & -\tuple{n_1, n_2, l} & &  \\
\p{delivered} &  & \tuple{n_1, n_2, l} & -\tuple{n_1, n_2, data} & -\tuple{All - n, n, ack} \\
 &  &  &  & - [n_1 \neq n_2]\tuple{All - n,All -  n, ack}\\
\end{block}
\end{blockarray}\vspace*{-0.75\baselineskip}
\]

\vspace*{0.25\baselineskip}
\[
\begin{blockarray}{ccccc}
& broadcast & deliver & acknowledge & complete \\
\begin{block}{c[cccc]}
  & (\tuple{n,data}+ &  (\tuple{n,All - n, data} + & \tuple{n,All - n}_{N} &  \tuple{n}_{N}  \\
\vect{I}_T & \tuple{All - n,ack})_{N} & \tuple{All - n, n, ack} + &  &    \\
 &  & [n_1 \neq n_2]\tuple{All - n,All -  n, ack})_{N} &  &    \\
\end{block}
\end{blockarray}\vspace*{-0.3\baselineskip}
\]

\caption{Incidence Matrix and T-semiflow of SN in Fig. \ref{fig:broadcast}}
\label{fig:T-flowbroad}
\end{figure}

\paragraph{Example 2}
The second example in this section is a variant of one presented in \cite{Camilli2021}, and it models a broadcast communication protocol. This model necessitates the utilization of the extended SN syntax. The ESN shown in Fig. \ref{fig:broadcast} (and the following) uses more sophisticated functions. The class $N$ denotes the nodes in a network, while the class $L$ is divided into two singleton subclasses that encode \emph{data} and \emph{acknowledge} messages, respectively. The protocol operates as follows: when a message $l$ is to be transmitted from a source node $n$ (the place $\p{sender}$), it is sent to all other nodes in the network (transition $broadcast$). Once a \emph{data} message has been delivered to a node, that node schedules an acknowledgment using the same broadcast mechanism. The protocol terminates successfully when all nodes, including the original sender, have received an acknowledgment.

A notable component of the SN is the function $\Inp{\p{delivered}, complete}$, previously described, which represents a sophisticated synchronization condition: all other nodes have given an acknowledgment to node $n$ (the original sender), and each of these nodes has received an acknowledgment from all nodes except $n$ and itself. 
 Within this function, a filter is applied to extract all pairs of distinct nodes from the parametric Cartesian product \((All - n) \times (All - n)\).
 
The ESN skeleton is shown in Fig. \ref{fig:skeletonbroad}. Since it does not allow any P-semiflows, the ESN consequently has no symbolic P-semiflows either.
In contrast, it admits the following conventional T-semiflow (with $K = |N|$): 
\[
[broadcast: K,\; deliver: K \cdot (K - 1),\; acknowledge: K - 1,\; complete: 1].
\]
The candidate symbolic semiflow shown in Fig. \ref{fig:T-flowbroad} is in agreement with this. To confirm that it is indeed a T-semiflow, we must resolve nontrivial compositions like the one below, which arises when multiplying the second row of \vect{H} by $\vect{I}_T$. The symbolic calculus reduces all expressions $e$ to a normal form $\hat{e} \in \EuScript{L}$.

The equivalence of terms can be established purely syntactically since the calculus guaranties that, whenever $e \equiv 0$, it follows that $\hat{e} = 0$.

\vspace{-10pt}
\begin{align*}
\tuple{n, All - n, l}  \circ \tuple{All - n,ack} &\equiv \\
\tuple{n, All, l}  \circ \tuple{All - n,ack} - \tuple{n, n, l}  \circ \tuple{All - n,ack} &\equiv \\
\tuple{All - n, All, ack}  - [n_1 == n_2]\, \tuple{All - n, All - n, ack} &\equiv \\
[n_1 \neq n_2] \, \tuple{All - n, All - n, ack}  + \tuple{All - n, n, ack} & \\
\end{align*}

We can directly verify that the row-by-column products are zero and therefore $\vect{I}_T$ is a semiflow. It corresponds to the completion of a basic protocol cycle: for any $n$ representing the original sender, and ignoring the specific order, it includes one $broadcast$ of a $data$ message from $n$, and $broadcast$s of an $ack$ message from all the other nodes; a set of $deliver$ events of the $data$ message from $n$ to every other node, a set of $deliver$ events of an $ack$ from each other node to $n$, and a set of $deliver$ events of an $ack$ from any node other than $n$ to any other node distinct from both $n$ and itself; the emission of an $aknowledge$ by every node other than $n$; and finally a $complete$ event, which represents the synchronization upon reception of all expected $ack$ messages by $n$ and by all other nodes.

\paragraph{Example 3}
The SN shown in Fig. \ref{fig:exe2} uses an ordered color class $C$ and a color domain in which this class is replicated. The possible symbolic semiflows (Fig. \ref{fig:exe2flows}) correspond to the semiflows of the underlying skeleton, namely $[\p{p}_1:1 \,\, \p{p}_2: |C| \,\, \p{p}_3:1]$ and $[t_1:1 \,\, t_2: 1 \,\, t_3:|C|-1]$. Using the algorithm defined in \cite{pnse2021} verify, for example, that the product of $\vect{I}^1_P$ by the first column of $\vect{H}$ is null. We obtain:
\begin{eqnarray*}
   \tuple{c_2} \circ \tuple{!c, c} + N \tuple{c} \circ -\tuple{c} + \tuple{c_1} \circ \tuple{c, All -!c} &\equiv& \tuple{c} - N \tuple{c} + (N-1) \tuple{c} 
\end{eqnarray*}
\noindent We can similarly prove that the other row-by-column products are nullified.

Regarding the interpretation, $\vect{I}^1_P$ expresses that for any reachable marking, the colors appearing in the second component of the 2-tuples in $\p{p}_1$, together with those in the first component of the tuples in $\p{p}_3$, always sum up to $N = |C|$ times the multiset of colors in $\p{p}_2$. Dually, $\vect{I}^2_P$ can be read as follows: in all reachable markings $\vect{m}$, the colors in the first component of the 2 tuples in $\p{p}_1$ plus those in the second component of the tuples in $\p{p}_3$ always form the entire color set $C$, where each color occurs with multiplicity $|\vect{m}(\p{p}_2)|$. 
The T-semiflow, instead, asserts that every firing sequence consisting of one occurrence $\tuple{c}$ of $t_1$ and $t_2$, together with all occurrences $\tuple{c, c'}$ of $t_3$ for which $c' \neq \, !c$, returns to the starting marking.

\begin{figure}
    \centering
    \includegraphics[width=0.7\linewidth]{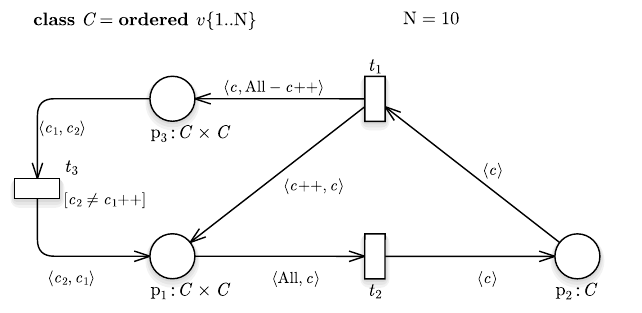}
    \caption{An SN with an ordered class (++ is !)}
    \label{fig:exe2}

\[
\begin{blockarray}{cccc}
 & t_{1_{C}} & t_{2_{C}} & t_{3_{C,C}}  \\
\begin{block}{c[ccc]}
\p{p}_1 & \tuple{!c, c} & -\tuple{All, c} &  \tuple{c_2, c_1}[c_2 \neq \, !c_1] \bigstrut[t] \\
\p{p}_2 & -\tuple{c} & \tuple{c} &   \\
\p{p}_3 & \tuple{c, All - \, !c} &  & -\tuple{c_1, c_2}[c_2 \neq \, !c_1]  \\
\end{block}
\end{blockarray}\vspace*{-1.25\baselineskip}
\]

\[
\begin{blockarray}{cccc}
& \p{p}_1 & \p{p}_2 & \p{p}_3  \\
\begin{block}{c[ccc]}
\vect{I}^1_P  & \tuple{c_2}_{C,C} \quad & \mathrm{N} \tuple{c}_{C} \quad & \tuple{c_1}_{C,C}   \\
\end{block}
\begin{block}{c[ccc]}
\vect{I}^2_P  & \tuple{c_1}_{C,C} &  \tuple{All} & \tuple{c_2}_{C,C} \\
\end{block}
\end{blockarray}\vspace*{-1.3\baselineskip}
\]

\vspace*{0.35\baselineskip}
\[
\begin{blockarray}{cccc}
& t_1 & t_{2} & t_{3}  \\
\begin{block}{c[ccc]} \vect{I}_T  & \tuple{c}_{C} & \tuple{c}_{C} & \tuple{c, All - \, !c}_{C}  \\
\end{block}
\end{blockarray}\vspace*{-1.3\baselineskip}
\]

\caption{Incidence Matrix and semiflows of SN in Fig. \ref{fig:exe2}}
\label{fig:exe2flows}
\end{figure}

\begin{figure}
    \centering
    \includegraphics[width=0.7\linewidth]{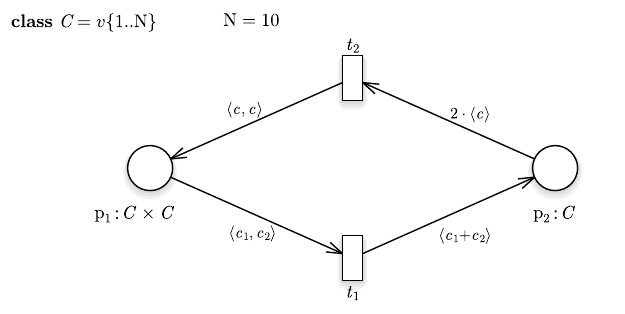}
    \caption{An SN with functions with non-one coefficients}
    \label{fig:exe3}
\end{figure}

\begin{figure}[!ht]
\[
\begin{blockarray}{ccc}
 & t_{1_{C,C}} & t_{2_{C}}   \\
\begin{block}{c[cc]}
\p{p}_1 & -\tuple{c_1,c_2}_{C,C} & \tuple{c,c}_{C}  \bigstrut[t] \\
\p{p}_2 & \tuple{c_1 + c_2}_{C,C} & -2\tuple{c}_{C}    \\
\end{block}
\end{blockarray}\vspace*{-1.25\baselineskip}
\]

\[
\begin{blockarray}{ccc}
& \p{p}_1 & \p{p}_2   \\
\begin{block}{c[cc]}
\vect{I}_P  & \tuple{c_1 + c_2}_{C,C} &  \tuple{c}_{C} \\
\end{block}
\end{blockarray}
\vspace*{-1.3\baselineskip}
\]

\vspace*{0.2\baselineskip}
\[
\begin{blockarray}{ccc}
& t_1 & t_{2}   \\
\begin{block}{c[cc]}
\vect{I}_T  & \tuple{c,c}_{C} & \tuple{c}_{C}  \\
\end{block}
\end{blockarray}\vspace*{-1.3\baselineskip}
\]

\caption{Incidence Matrix and semiflows of SN in Fig. \ref{fig:exe3}}
\label{fig:exe3flows}
\end{figure}

\paragraph{Example 4}
In the small SN shown in Fig. \ref{fig:exe3}, we first encounter arc functions whose images are multisets with multiplicities greater than one. This raises no theoretical issues for our calculus; instead, it is forbidden in \cite{Eva2007}. All possible semiflows can be checked directly. For example, $\vect{I}^1_P$ states that for $\p{p}_1$, the sum of the colors in the two components of its 2-tuples always matches the multiset of colors in $\p{p}_2$.

\section{Generating A Semi-Flow Basis}
\label{sec:basis}
The capability to formally verify symbolic semiflows is particularly valuable, as modelers typically possess a priori knowledge regarding the expected system behavior and seek to corroborate this knowledge through automated verification support. Nevertheless, the capability to construct a semiflow generative family represents an even more critical issue from both theoretical and practical standpoints. As mentioned previously, existing theoretical approaches have achieved only partial success, as they rely on imposing rather stringent syntactic constraints on the admissible classes of arc functions.  

In this section, we concisely present a hybrid empirical–theoretical methodology that, instead of relying on advanced algebraic frameworks, exploits Property \ref{propr:skel} under the standing assumption that all arc functions have a constant size.

\begin{claim}
The set of arc functions in $\lang_{\cdom_{\ColClasses},\cdom_{\ColClasses}'}$ of a certain size $k$ is finite and can be computed explicitly.
\end{claim}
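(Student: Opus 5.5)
The approach is to work at the level of the normal form of the language $\lang$ rather than at the level of syntax. The claim is about functions up to equivalence $\equiv$, so syntactic variants need not be counted separately. The argument has three steps.

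\emph{Step 1: normal form as a sum of guarded tuples.} Every function in $\lang_{\cdom_{\ColClasses},\cdom_{\ColClasses}'}$ of constant size can be rewritten by the calculus of \SNex\ into a normal form $\sum_i \lambda_i F_i$. Here the $F_i$ are pairwise disjoint guarded tuples whose images are sets (multiplicity one), and the $\lambda_i$ are positive or negative integers. This is the same representation already used in Section~\ref{sec:semiflow-ver} to simulate composition in $\lang$ through $\overline{\lang}$.

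\emph{Step 2: only finitely many candidate tuples.} Fix the domains $\cdom$ and $\cdom'$. Each tuple component is built from finitely many elementary functions: projections $c_j$ on the occurrences of a class in $\cdom$, successors $!c_j$ for ordered classes, and $All$ / $All_{\ColClasses_k}$ on static subclasses. These are combined by a bounded linear combination, and since each elementary function has constant size $0$, $1$, $|\ColClasses_k|$ or $|\ColClasses|$, the coefficients are bounded. Guards are Boolean combinations of finitely many atomic predicates ($c_i = c_j$, $c_i = !c_j$, $c_i \in \ColClasses_k$). They therefore take finitely many values up to logical equivalence, and we can restrict to the finitely many minterms, which are mutually disjoint.

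\emph{Step 3: bounding coefficients by the size.} Because the function has constant size $k$ and disjoint terms contribute additively, we get $k = \sum_i \lambda_i \cdot size(F_i)$. For semiflow components (nonnegative functions) all $\lambda_i \ge 0$ and each $size(F_i) \ge 1$ on its guard. Hence the number of terms and every coefficient are bounded by $k$, and there are finitely many such functions. Explicit computation then means enumerating all combinations satisfying the size equation and reducing each to normal form. Syntactic comparison of normal forms removes duplicates, which is justified because $\hat e = 0$ iff $e \equiv 0$.

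The main obstacle is Step 3 when negative coefficients are allowed: cancellations could in principle give infinitely many combinations of the same total size. I would first restrict the claim to the $\Bag{\cdom'}$-valued functions relevant for semiflows. For general flows I would instead use the pointwise view. Each function is determined by its values on the finitely many ``color-type'' equivalence classes of $\cdom$ induced by the guard minterms, and each value is a multiset of constant cardinality. A bound on the coefficients then requires bounding multiplicities, which I would obtain by bounding the absolute size $\sum_i |\lambda_i|\, size(F_i)$ rather than the signed size.

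A subtler point is that the enumeration may depend on the class cardinalities. This is handled because the size constraint $k$ is itself expressed for a fixed class-size instance, so all sizes are concrete integers.
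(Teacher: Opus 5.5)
Your argument follows the paper's own justification. The paper relies on a fixed, finite stock of class functions and standard predicates over $\cdom_{\ColClasses}$, and on writing any arc function as $\sum_i \lambda_i T_i$, with $\lambda_i \in \Nat$ and each $T_i$ a (possibly filtered) tuple with multiplicity-one images, so that for fixed class cardinalities the size-$k$ functions come from a finite enumeration. You add the explicit bound $\lambda_i \le k$, the guard minterms, and duplicate removal through normal forms.

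There is one small slip in Step 2. Constant size does not by itself bound the integer coefficients of a class function: $\alpha(c_1 - c_2)$ has size $0$ for every $\alpha$. This does no harm, because in the normal form the tuple components are set-valued up to sign, so they range over finitely many Boolean combinations of projections, successors and subclass constants.
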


The preceding claim is naturally justified by the fact that the collection of class-functions with domain $\cdom_{\ColClasses}$ is uniquely fixed, and the symbolic color-tuples in the domain $\cdom'_{\ColClasses}$ of cardinality $k$ can be generated by an algorithmic enumeration. In general, this enumeration problem, which is theoretically nontrivial, may become computationally expensive when the parameter $k$ is large and-or the color domains are composite. From this point of view, Property \ref{propr:eqT-P-flows} may help. 

We argue that the difficulty of this task can be significantly reduced by leveraging certain properties of SN arc functions. In particular, any arc function $F$ can be expressed in the form $\sum_i \lambda_i T_i$, with $\lambda_i \in \Nat$, where each $T_i$ is a tuple (optionally preceded by a filter) that produces multisets in which all element multiplicities are equal to one.

In this paper, we do not introduce any algorithm; instead, we demonstrate how to use the previous claim to construct a semiflow generative family, drawing on a simple example from Section \ref{sec:semiflow-ver}.
More challenging cases arise when considering the other examples in that section.

Consider the SN depicted in Fig.~\ref{fig:PL}, whose underlying skeleton is provided in the Appendix (Fig.~\ref{fig:skeleton}). The P-semiflow basis of this skeleton consists of the two vectors $[\p{start}:1 \quad \p{line}:1 \quad \p{ass}:1]$ and $[\p{ok}:1 \quad \p{focc}:1]$. In order to derive the corresponding symbolic P-semiflows, we must first select a common subdomain for the places belonging to the support of each semiflow. In this case, there is one candidate subdomain for each of the two P-semiflows, namely $L$ and $PL$, respectively. These classes are neither ordered nor partitioned; consequently, the corresponding admissible class-functions are $\{l,All_L\}$ and $\{pl,All_{PL}\}$. Depending on the place under consideration, these functions may have different domains (for instance, $l_L$ for the place $\p{start}$ and $l_{PL,L}$ for the place $work$); we let the context resolve this ambiguity. Hence, by concentrating on the first P-semiflow and observing that the only arc functions of cardinality one that can be constructed are $\tuple{l}$ if $K > 2$ and either $\tuple{l}$ or $\tuple{All - l}$ if $K = 2$, we directly conclude that the symbolic P-semiflow $\vect{I}_p^1$ depicted in Fig.~\ref{fig:PLflows} constitutes part of the basis for $K > 2$. If $K = 2$, the P-vector $[\p{start}:\tuple{All - l}_{L} \quad \p{line}:\tuple{All - l}_{PL,L} \quad \p{ass}:\tuple{All - l}_{L}]$ also constitutes a semiflow, which is equivalent to the previously defined one. We do not elaborate further on this aspect here. The basis of P-semiflows is completed by $\vect{I}_p^2$ (Fig.~\ref{fig:PLflows}), for which analogous comments can be made.

The unique minimal T-semiflow of the SN skeleton is given by  
\[
[load:1 \;\; work:K \;\; assembly:1],
\]  
And therefore, the only admissible subdomain of transition domains is \(PL\).  
This subdomain will serve as the common domain for all functions that make up the symbolic T-semiflow.
With respect to the transition \(work\), we must determine arc functions of cardinality  $K := |L|$. The only feasible candidates for the corresponding semiflow component are $\tuple{pl, All_L}$ and, in the specific case where $N (:= |PL|) = K = 2$,
the additional candidate \(2 \langle All - pl, All_L \rangle\).

\section{Colored Flows}
\label{sec:cflow}

Symbolic flows capture notable properties, both qualitative (the types of colors) and quantitative (the number of colors). In general, however, they comply with rather restrictive, conservative patterns.

It is frequently advantageous to verify qualitative and quantitative invariants separately. In particular, the preservation of specific structural properties of colored place markings can be independent of the precise cardinality of tokens present, for example in the case of unbounded models. This is exactly the purpose of \emph{colored semiflows}, hereafter referred to simply as Csemiflows.

Let $\vect{H}^+$ and $\vect{H}^-$ be the $|P|\times|T|$ matrices defined entrywise by
\[
\vect{H}^+[p,t] = \overline{\Out{p,t} \ominus \Inp{p,t}}, 
\qquad
\vect{H}^-[p,t] = \overline{\Inp{p,t} \ominus \Out{p,t}}.
\]
Equivalently, for a given color instance of transition $t$, the entries $\vect{H}^+[p,t]$ and $\vect{H}^-[p,t]$ represent, respectively, the sets of color tuples (tokens) that are added and removed from place $p$.

\begin{definition}[P-Csemiflow] Let $\vect{I}_{P-c}$ be an $1 \times |P|$ vector of functions such that $\vect{I}_{P-c}[p] : \cdom(p) \rightarrow 2^{\cdom'}$ and $\vect{I}_{P-c}[p] \not\equiv 0 \Rightarrow \cdom' \leq \cdom(p)$.\\
$\vect{I}_{P-c}$ is a P-Csemiflow if and only if $\vect{I}_{P-c} \cdot \vect{H}^+  \,\equiv\, \vect{I}_{P-c} \cdot \vect{H}^-$.
\label{def:P-c-semif}
\end{definition}

\noindent T-Csemiflows are defined analogously. 

\paragraph{Example 5} Consider the SN depicted in Fig. \ref{fig:exe5}.  The underlying skeleton admits one P-semiflow: $[\p{p}_1:2 \quad \p{p}_2: 1 \quad \p{p}_3:3]$; It is straightforward to verify that there are no symbolic semiflows that match it (using property \ref{propr:skel} and considering the functions of proper size). Each transition is equipped with a guard, and there is also a self-loop between $t_1$ and $\p{p}_2$. A transition guard is naturally propagated over the incident arc functions. To construct the matrices $\vect{H}^+$ and $\vect{H}^-$, we must rewrite the difference $\vect{H}[\p{p}_2,t_1] := \Out{\p{p}_2,t_1} - \Inp{\p{p}_2,t_1}$ as a pairwise disjoint algebraic sum, which can always be achieved in our framework. In this simple situation, we obtain (omitting the tuple notation; $g_1 \equiv \Guards(t_1)$):
$$(-2 c_1 + c_2)[g_1] \equiv  c_2[c_1 \neq c_2][g_1] -(c_1[c_1 == c_2] +2 c_1[c_1 \neq c_2])[g_1] $$

When considering the support of positive and negative terms, we remove multiplicities and obtain $\vect{H}^+[\p{p}_2,t_1] = c_2[c_1 \neq c_2 \wedge g_1]$, $\vect{H}^-[\p{p}_2,t_1] = c_1[g_1]$. (Hereinafter, in this subsection, we omit the multiset support notation for simplicity.). The other elements of $\vect{H}^+$ and $\vect{H}^-$ are simply derived.

At this stage, we can verify (by applying the calculus in $\overline{\lang}$) whether the two vectors $\vect{I}_P^i$ depicted in Fig.~\ref{fig:exeCflow} actually define colored semiflows. The first would, in that case, correspond to the conservation of the colors in place $\p{p}_2$ and the first component of the 2-tuples in place $\p{p}_1$, respectively, with respect to the subclass $C_1$.  
However, this verification fails. 
Consider, for example, the product of $\vect{I}_P^1$ with the first column of $\vect{H}^+$ and, analogously, with the first column of $\vect{H}^-$:
\begin{eqnarray*}
\tuple{c}[c \in C_1] \circ \tuple{c_2}[g_1 \wedge c_1 \neq c_2]  & \equiv& \tuple{c_2}[g_1 \wedge c_1 \neq c_2] \\
\tuple{c_2}[c_2 \in C_1] \circ \tuple{c_1,c_2}[g_1] + \tuple{c}[c \in C_1] \circ \tuple{c_1}[g_1 ] &\equiv& \tuple{c_2}[g_1 ] + \tuple{c_1}[g_1 ]\\
\end{eqnarray*}
\noindent the resulting expressions, of course, are not equivalent.
In contrast, we easily verify that $\vect{I}_P^2$ is a Csemiflow: the set of colors in place $\p{p}_3$ and the first component of the 2-tuples in $\p{p}_1$ remain invariant.

An interesting concluding observation is that the calculus in $\overline{\lang}$ implemented in \SNex\ is inherently parametric with respect to the cardinalities of color classes, which can be specified by linear constraints; for example, $|C| \geq 3$. Consequently, we are able to validate Csemiflows not only in a purely symbolic manner but also in a parametric setting. For instance, the aforementioned results hold irrespective of the cardinality of the subclass \(C_1\).  
A detailed investigation of this parametric analysis is beyond the scope of the present work.

\begin{figure}
    \centering
    \includegraphics[width=0.75\linewidth]{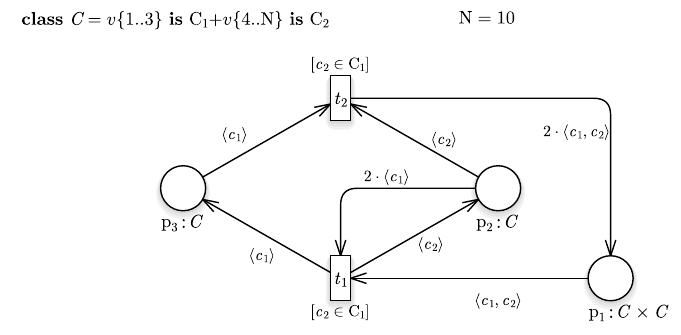}
    \caption{SN with a split color class and guards}
    \label{fig:exe5}
\end{figure}

\begin{figure}[!ht]
\centering
\begin{minipage}{0.48\linewidth}
\centering
\[
\begin{blockarray}{ccc}
\vect{H}^+ & t_{1_{\,C,C}} & t_{2_{\,C,C}}   \\
\begin{block}{c[cc]}
\p{p}_1 &  & \tuple{c_1,c_2}[g_2]  \bigstrut[t] \\
\p{p}_2 & \tuple{c_2}[g_1 \wedge c_1 \neq c_2] &     \\
\p{p}_3 &  \tuple{c_1}[g_1] &     \\
\end{block}
\end{blockarray}
\]
\end{minipage}\hfill
\begin{minipage}{0.48\linewidth}
\centering
\[
\begin{blockarray}{ccc}
\vect{H}^- & t_{\,1_{C,C}} & t_{\,2_{C,C}}   \\
\begin{block}{c[cc]}
\p{p}_1 & \tuple{c_1,c_2}[g_1] & \bigstrut[t] \\
\p{p}_2 & \tuple{c_1}[g_1] & \tuple{c_2}[g_2]    \\
\p{p}_2 &  & \tuple{c_1}[g_2]    \\
\end{block}
\end{blockarray}
\]
\end{minipage}

\[
\begin{blockarray}{cccc}
& \p{p}_1 & \p{p}_2  & \p{p}_3 \\
\begin{block}{c[ccc]}
\vect{I}_P^1  & \tuple{c_2}[c_2 \in C_1]_{C,C} \, \, \, &  \tuple{c}[c \in C_1]_{C} \, \, \, & \\
\end{block}
\begin{block}{c[ccc]}
\vect{I}_P^2  & \tuple{c_1}_{C,C} \, \, \, &  \, \, \, & \tuple{c}_{C}\\
\end{block}
\end{blockarray}
\]

\caption{$\vect{H}^+$ and $\vect{H}^-$ matrices and P-Csemiflow of SN in Fig. \ref{fig:exe5}}
\label{fig:exeCflow}
\end{figure}

\section{Verifying More General Invariants}
\label{sec:genInv}
Using a simple yet nontrivial example, we then briefly illustrate how the SN structural calculus implemented in \SNex\ can be employed to rigorously verify properties that go beyond symbolic or colored semiflows. With this method, we broaden the class of models that can be examined (including, for instance, unbounded models) and account for inhibitor arcs, which substantially enhance the expressive power of the SN modeling framework.

The approach is based on a well-established principle. Our goal is to establish the property $\EuScript{P}$, which is formally defined in an extended language $\lang^+$ of $\lang$ (or $\overline{\lang}^+$ if we disregard the quantitative aspects).

\begin{enumerate}
    \item We first show that $\EuScript{P}$ is satisfied in the initial marking (which can also be described symbolically).
    \item We then show that if $\EuScript{P}$ is met in a generic marking $\vect{m}$, then it is also met in any $\vect{m}'$ such that $\vect{m} [(t, c) > \vect{m}'$, for all $t \in T, c \in \cdom(t)$.
\end{enumerate}

Our focus is on the second point, which is the most significant. The essential requirement is that this proof be carried out at a symbolic level, that is, by operating on symbolic transition instances that compactly encode the corresponding concrete instances. To this end, and in analogy with SMT-based techniques, we employ the \SNex\ rewriting engine to infer new axioms (rewriting rules), which are then dynamically incorporated into the underlying theory. Currently, this inference process is performed manually. However, we plan to extend the \SNex\ CLI to make the overall workflow semi-automated.

From a technical point of view, we must enrich $\lang$ ($\overline{\lang}$) with symbols that denote parametric (multi)sets over specified color domains. We use symbols like $Z_{\cdom_\ColClasses}$ to denote a \emph{parametric} (multi-)set in $\cdom_\ColClasses$, where $\cdom_\ColClasses$ denotes any subdomain made up of color classes in $\ColClasses$.
This symbol can be interpreted as a function with a null (or neutral) domain; that is, $ \bullet \rightarrow Bag[\cdom_\ColClasses]$ or $ \bullet \rightarrow 2^{\cdom_\ColClasses}$, depending on whether we are considering $\EuScript{L}^+$ or $\overline{\EuScript{L}}^+$.

The functional operators of the calculus are then applied in a uniform manner. For example, we may form expressions such as $\tuple{Z_{\cdom_\ColClasses}^1, Z_{\cdom_\ColClasses}^2}$, $Z_{\cdom_\ColClasses}^1 \ \mu\ Z_{\cdom_\ColClasses}^2$, or $F \circ Z_{\cdom_\ColClasses}^1$, where $\mu \in \{+(-),\cap, \ominus \}$, $F \in \fset{\cdom_\ColClasses}{D}$.
    
\begin{figure}[t]
    \centering
    \includegraphics[width=0.7\linewidth]{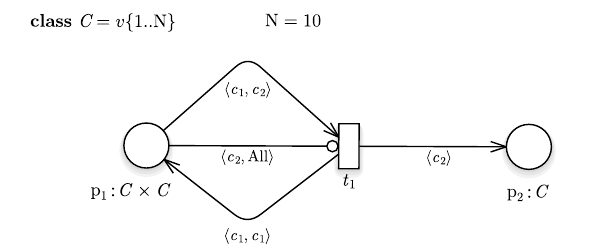}
    \caption{An SN transition with inhibitor edge}
    \label{fig:exe6}
\end{figure}

\paragraph{Example 6}
Let us clarify the concept using the SN in Fig. \ref{fig:exe6}. Our goal is to show that the first component of the 2-tuples of colors in place $\p{p}_1$ and the colors in place $\p{p}_2$ are disjoint, under the assumption that this holds in the initial marking $\vect{m}_0$. Demonstrating this directly by inspecting the color functions is not straightforward at all. We can represent an arbitrary SN marking $\vect{m}$ as: $[\vect{m}(\p{p}_1): Z_{C,C}^1 \quad \vect{m}(\p{p}_2): Z_{C}^1]$. The invariant property is (we suppose that '$\circ$' takes priority over'$\cap$' and the latter over '+'):
$${c_1} \circ Z_{C,C}^1 \cap Z_{C}^1 \equiv 0$$

The first natural step consists of incorporating the corresponding rewriting rule ${c_1} \circ Z_{C,C}^1 \cap Z_{C}^1 \longrightarrow 0$ into the framework. Subsequently, we syntactically encode within $\vect{m}$ the enabling condition associated with transition $t_1$. In principle, this procedure can be fully automated.
A generic symbolic instance of $t_1$ is given by $(t_1,\tuple{Z_{C}^2, Z_{C}^3})$, where $Z_{C}^2$ and $Z_{C}^3$ denote parametric colors in $C$ that are bound to variables (projections) $c_1$ and $c_2$, respectively.  

The constraints imposed by the input and inhibitor arc functions can be represented in the form of rewrite rules. With respect to the input constraint, we obtain:
\[
Z_{C,C}^1 \longrightarrow Z_{C,C}^2 + \tuple{Z_{C}^2, Z_{C}^3}, \quad |Z_{C}^2| = 1, |Z_{C}^3| = 1.
\]

 By substituting this encoding into the invariant expression, which is supposed to hold in $\vect{m}$, and then applying the composition rules together with standard multiset properties—such as the distributivity of composition over multiset sum and the following:

$$
(A + B) \cap C \equiv 0 \Rightarrow A \cap C \equiv 0 \wedge B \cap C \equiv 0,
$$
we finally obtain this expression, which can be simplified:
\begin{eqnarray*}
{c_1} \circ (Z_{C,C}^2 + \tuple{Z_{C}^2, Z_{C}^3}) \cap Z_{C}^1 & \equiv  {c_1} \circ Z_{C,C}^2 \cap Z_{C}^1 + Z_{C}^2 \cap Z_{C}^1  \equiv \emptyset \\
\end{eqnarray*}

\vspace{4pt}
\noindent From which we infer the rules: i) ${c_1} \circ Z_{C,C}^2 \cap Z_{C}^1\longrightarrow 0$, ii) $Z_{C}^2 \cap Z_{C}^1  \longrightarrow 0$.

The constraint imposed by the inhibitor arc function is formalized as follows:
\begin{eqnarray*}
\tuple{Z_{C}^3, All} \cap (Z_{C,C}^2 + \tuple{Z_{C}^2, Z_{C}^3}) < \tuple{All, All} & \cong \\
\tuple{Z_{C}^3, All} \cap (Z_{C,C}^2 + \tuple{Z_{C}^2, Z_{C}^3})  \equiv 0 & \cong \\
\tuple{Z_{C}^3, All} \cap Z_{C,C}^2 \equiv 0 \, \wedge \, \tuple{Z_{C}^3, All} \cap \tuple{Z_{C}^2, Z_{C}^3} \equiv 0 &
\end{eqnarray*}
\noindent From which we infer the rules: iii) $Z_{C}^3 \cap c_1 \circ Z_{C,C}^2 \longrightarrow 0$,  iv) $Z_{C}^3 \cap Z_{C}^2 \longrightarrow 0$.

\vspace{5pt}
The marking expression we obtain by symbolically firing the instance $(t_1,\tuple{Z_{C}^2, Z_{C}^3})$ in \vect{m} is: 

\vspace{5pt}
\hspace{2.5cm}$[\vect{m}'(\p{p}_1): Z_{C,C}^2 + \tuple{Z_{C}^2,Z_{C}^2} \quad \vect{m}'(\p{p}_2): Z_{C}^1 + Z_{C}^3]$.

\vspace{5pt}
\noindent Finally, it remains to establish the following result:
\begin{align*}
  {c_1} \circ (Z_{C,C}^2 + \tuple{Z_{C}^2,Z_{C}^2}) \cap (Z_{C}^1 + Z_{C}^3)  \equiv 
  ({c_1} \circ Z_{C,C}^2 + Z_{C}^2) \cap (Z_{C}^1 + Z_{C}^3)  \equiv 0 & \cong  \\
  {c_1} \circ (Z_{C,C}^2) \cap Z_{C}^1  \equiv 0 \wedge {c_1} \circ (Z_{C,C}^2) \cap Z_{C}^3  \equiv 0 \wedge
  Z_{C}^2 \cap  Z_{C}^1  \equiv 0 \wedge
  Z_{C}^2 \cap  Z_{C}^3  \equiv 0 &   
\end{align*}

\noindent which can be directly inferred from axioms i)-iv).

As a final remark, we note that this methodology can also be employed to validate candidate symbolic P-semiflows (or colored flows). However, in this case, validation is achieved through a more involved and less streamlined procedure than the approaches presented in Sections \ref{sec:semiflow-ver} and \ref{sec:cflow}.

\section{Conclusions} We have demonstrated that the symbolic structural calculus for Symmetric Nets (SN), developed over the past two decades to derive structural dependencies between SN nodes and implemented in the \SNex\ tool, can also be used to validate a broad class of symbolic (semi)flows for extended Symmetric Net (ESN) models. Furthermore, we have conducted a preliminary investigation into the derivation of a semiflow basis by exploiting a natural correspondence with the conventional semiflows of the ESN skeleton. Finally, we have outlined how the calculus can be used to formally verify more general classes of structural invariants.


Practically, we plan to extend \SNex\ by (1) fully supporting the composition of multiset functions, (2) adding an interactive inference mechanism that feeds newly derived axioms into the \SNex\ rewriting engine, and (3) enabling smooth interoperability with other tools, especially \GreatSPN.
Theoretically, we study whether generalized inverses of arc functions (extended linearly) can be expressed symbolically to construct a generative family of symbolic flows, potentially building on the preliminary results of this paper.


  

\bibliographystyle{eptcs}
\bibliography{biblio}

\newpage

\appendix

\section*{Appendix: proofs, unfolding, and conventional semiflows}

\proof{Property \ref{propr:skel}. Consider the $j^{th}$ column of $\vect{H}$. The row-by-column product $\sum_{p \in P} \vect{I}_P[p] \circ \vect{H}[p,t_j]$ results in the null function (Hp). Since the function linear extension is also constant-size, $\vect{I}_P[p] \circ \vect{H}[p,t_j]$ is constant-size and $size(\vect{I}_P[p] \circ \vect{H}[p,t_j]) = size(\vect{I}_P[p]) \cdot (size(\Out{p,t_j}) - size(\Inp{p,t_j})$.\\ Therefore, the algebraic sum corresponding to the conventional row-by-column product is zero; thus, $\vect{I}'_P$ is a conventional flow of $\EuScript{N}_{skel}$.

\noindent Property \ref{propr:eqT-P-flows}. It follows directly from the definition of the transpose of a matrix, together with the fundamental rule connecting the transpose of a function with composition:
$g \circ f \equiv f^t \circ {g}^t$    
}


   
\begin{figure}[h]
\begin{center}
\includegraphics[width=0.8\textwidth]{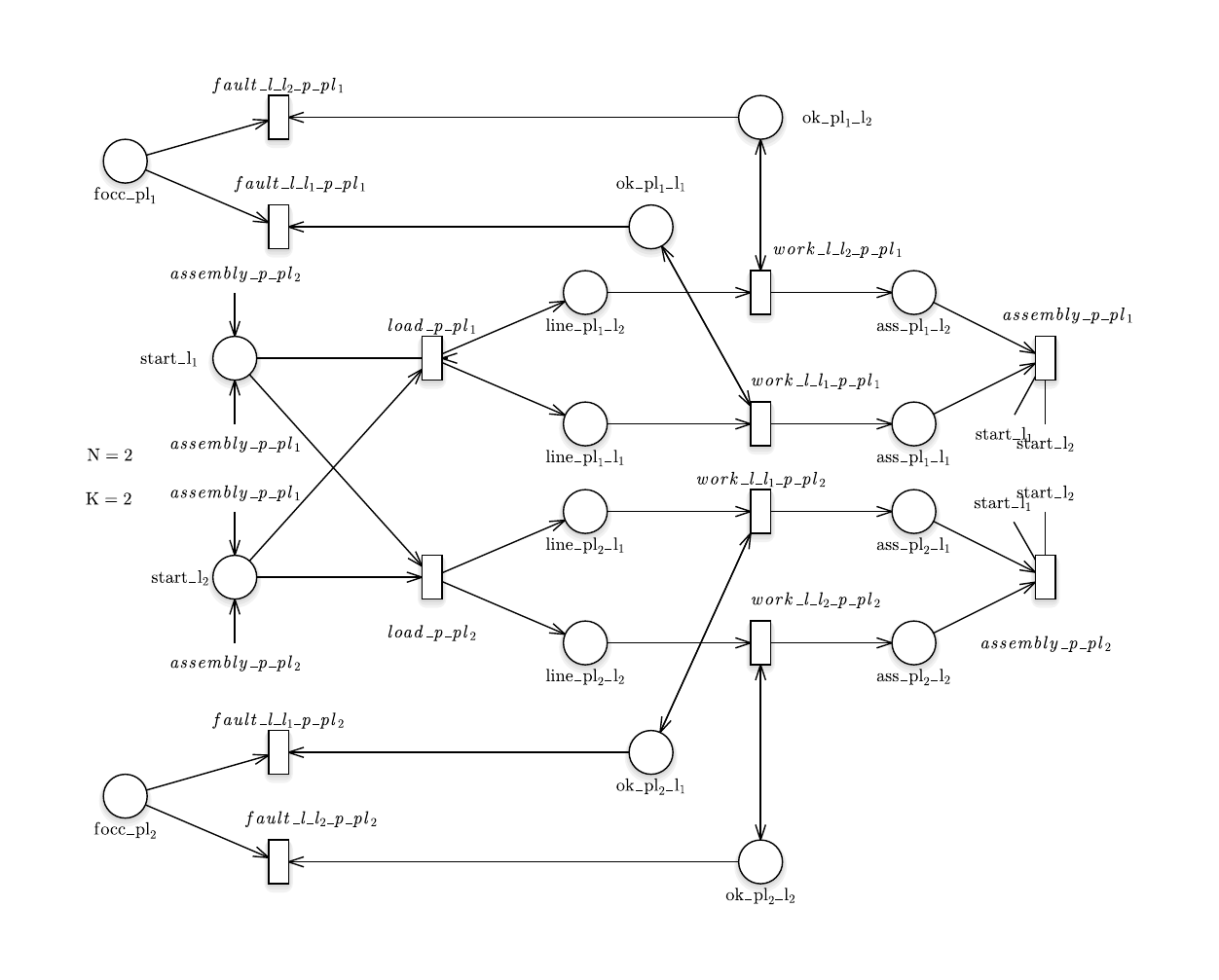}

\caption{Unfolding of the PL model ($N = 2$, $K = 2$)}
\label{fig:unfolding}
\end{center}
\end{figure}

\begin{center}
{\footnotesize
\begin{tabular}{lllll}
\hline
 \p{start\_l_1} & \p{line\_pl_1\_l_2} & \p{line\_pl_2\_l_2} & \p{ass\_pl_1\_l_2} & \p{ass\_pl_2\_l_2} \\
 \p{start\_l_1} & \p{line\_pl_1\_l_1} & \p{line\_pl_2\_l_2} & \p{ass\_pl_1\_l_1} & \p{ass\_pl_2\_l_2} \\
 \p{start\_l_1} & \p{line\_pl_1\_l_2} & \p{line\_pl_2\_l_1} & \p{ass\_pl_1\_l_2} & \p{ass\_pl_2\_l_1} \\
 \p{start\_l_1} & \p{line\_pl_1\_l_1} & \p{line\_pl_2\_l_1} & \p{ass\_pl_1\_l_1} & \p{ass\_pl_2\_l_1} \\
 \p{start\_l_2} & \p{line\_pl_1\_l_2} & \p{line\_pl_2\_l_2} & \p{ass\_pl_1\_l_2} & \p{ass\_pl_2\_l_2} \\
 \p{start\_l_2} & \p{line\_pl_1\_l_1} & \p{line\_pl_2\_l_2} & \p{ass\_pl_1\_l_1} & \p{ass\_pl_2\_l_2} \\
 \p{start\_l_2} & \p{line\_pl_1\_l_2} & \p{line\_pl_2\_l_1} & \p{ass\_pl_1\_l_2} & \p{ass\_pl_2\_l_1} \\
 \p{start\_l_2} & \p{line\_pl_1\_l_1} & \p{line\_pl_2\_l_1} & \p{ass\_pl_1\_l_1} & \p{ass\_pl_2\_l_1}\\
\end{tabular}
\vspace{1em}

\begin{tabular}{llll}
\hline
$\mathit{load\_p\_pl_2}$ & $\mathit{work\_l\_l_1\_p\_pl_2}$ & $\mathit{work\_l\_l_2\_p\_pl_2}$ & $\mathit{assembly\_p\_pl_2}$ \\
$\mathit{load\_p\_pl_1}$ & $\mathit{work\_l\_l_1\_p\_pl_1}$ & $\mathit{work\_l\_l_2\_p\_pl_1}$ & $\mathit{assembly\_p\_pl_1}$ \\
\hline
\end{tabular}

\captionof{table}{P- and T-semiflows of the Petri net shown in Fig. \ref{fig:unfolding}.}
}
\end{center}

\end{document}

%% file: macros.tex
\newcommand{\cdom}{{\ensuremath{\EuScript{D}}}}

\newcommand{\Inp}[1]{I(#1)}
\newcommand{\Out}[1]{O(#1)}
\newcommand{\Inh}[1]{H(#1)}

\newcommand{\ColClasses}{\ensuremath{\Sigma}}
\newcommand{\Guards}{\ensuremath{\Phi}}

\newcommand{\mk}[1]{\ensuremath{\mathbf{#1}}}

\newcommand{\vect}[1]{\ensuremath{\mathbf{#1}}}

\newcommand{\Bag}[1]{\ensuremath{Bag[{#1}]}}
\newcommand{\Gbag}[1]{\ensuremath{Bag^*[{#1}]}}
\newcommand{\fset}[2]{\ensuremath{\{#1 \rightarrow #2\}}}

\newcommand{\p}[1]{\ensuremath{\mathrm{#1}}}

\newcommand{\SNex}{\textsf{SNexpression}}
\newcommand{\GreatSPN}{\textsf{GreatSPN}}

\newcommand{\tuple}[1]{\langle{#1}\rangle}
\newcommand{\lang}{\EuScript{L}}
\newcommand{\Nat}{\mathbb{N}} 
\newcommand{\Int}{\mathbb{Z}}

%% file: introduction.tex
High-Level Petri Nets (HLPNs) \cite{HLPN} constitute a broad family of Petri Net-based formalisms (including, among others, Algebraic Petri Nets, Predicate/Transition Nets and Colored Petri Nets) that extend basic Petri Nets (PN) \cite{ReisigPN} by admitting structured tokens rich in data in place of indistinguishable ones. In HLPNs, both the places and transitions—the nodes of the underlying bipartite graph—are typed or parameterized, so that each node may admit multiple distinct instantiations. The arcs of an HLPN are labeled with expressions which, depending on the specific formalism, can be algebraic terms, functions, or other forms of symbolic annotation. By enabling such data and type structures, HLPNs provide a more abstract and compact representation of complex concurrent systems.

In Colored Petri Nets (CPNs), according to the original definition of Jensen \cite{Jensen1991,CPN09}, the nodes of the net are associated with \emph{finite color domains} . Each arc is annotated with a function from the color domain of the incident transition into multisets over the color domain of the incident place. A CPN can be converted into an equivalent Place/Transition (P/T) net called \emph{unfolding} \cite{Jensen1991,ReisigPN}. In realistic scenarios, this unfolding becomes excessively complex or even computationally intractable, and the subsequent back-interpretation of the analysis results may be rendered considerably more difficult.

Symmetric Nets (SN) \cite{CDFH93}, formerly known as Well-formed Nets, constitute a class of CPNs characterized by a dedicated syntactic structure explicitly tailored to exploit the behavioral symmetries of the underlying model during its analysis.
The domains associated with SN nodes are  Cartesian products of basic \emph{color classes}. A color class can be partitioned into static subclasses (or ordered circularly); colors in a subclass represent entities with similar behavior.  
SN functions encode the color domain structure of the nodes; arcs are annotated with expressions that are composed of tuples of base color functions.
Despite this syntactic specialization, SNs have been formally shown to be equally expressive as the original CPN formalism \cite{Jensen1991}, in the strict sense that any CPN can be systematically transformed into an equivalent SN.
Using a symbolic initial marking for an SN, together with a symbolic firing rule, one builds a compact Symbolic Reachability Graph, which is strongly bisimilar to the ordinary Reachability Graph. For Stochastic SN, this graph yields a lumped Markov chain; alternatively, it is possible to perform simulations using a symbolic discrete‑event simulation framework \cite{GreatSPN}.

A distinctive advantage of Petri Nets is that one can obtain meaningful properties directly from their structure, without constructing the state space; extending this capability efficiently to CPNs, without using unfolding (that is, symbolically), is far from straightforward. SN structural analysis has made substantial progress in the past two decades. The theoretical foundations are rooted in \cite{CAPRA2005}, which defines a language to symbolically express and calculate the principal structural relations (conflict, causal connection, mutual exclusion) \cite{QEST2015}. The elements of this language are similar to the SN arc functions, but show expanded expressivity by using guards also as function prefixes.
The {\SNex} software tool \cite{Capra2020} (\url{www.di.unito.it/~depierro/SNexpression}), featuring a command-line interface (CLI) built on top of an extensible Java library, functions as a computer algebra system that reduces user-specified arbitrary structural expressions
over pairs of SN nodes to comprehensible normal forms.
In a recent version of the tool \cite{FORTE25}, this calculus has been extended to operate at the matrix level (i.e. over the entire SN), thus facilitating the computation of more general structural dependencies, including the handling of priorities.

In this paper, we examine the effectiveness of the algebraic calculus underlying the \SNex\ tool for symbolic verification of structural \emph{invariants} of an \emph{extension} of SN (denoted ESN) that exhibits advantageous algebraic properties, with particular emphasis on (semi)flows. Specifically, we exploit the fact that the ESN arc functions constitute a class that is closed under fundamental functional operators, most notably composition \cite{pnse2021} and transposition. To the best of our knowledge, no alternative methodology has been proposed for the verification and computation of symbolic invariants without imposing constraints on SN color functions. In contrast, our approach is grounded in an extension of the SN formalism.
The main goal is to show that the symbolic computation engine implemented in \SNex\ efficiently supports semi-automatic formal verification of structural invariant properties, complementing the computation of structural dependencies.

In particular, we treat four distinct yet related topics.
We first address the verification of symbolic (semi-)flows, which currently cannot be directly generated or verified for general SN models, but only for certain restricted subclasses. We subsequently introduce a variant, referred to as colored flows, which abstracts away from quantitative characteristics (i.e., token multiplicities) and concentrates solely on qualitative properties (i.e., color types); in contrast to general symbolic flows, colored flows do not rely on conservative assumptions.
Furthermore, we discuss how, in principle, one could derive a generative family (basis) of semiflows by exploiting the constant-size property of ESN functions and by relating symbolic semiflows to conventional semiflows of the underlying skeleton (a P/T net). Finally, we outline a more general approach based on standard inductive inference techniques, whose full automation, however, necessitates extensions of the supporting tool. Although the exposition adopts a formal style, all concepts are elucidated by means of illustrative examples.

\paragraph{Related work}
The computation of generative families of flows—especially positive flows—for classical Petri nets (PN), i.e., integer solutions of a homogeneous linear system whose coefficient matrix is the incidence matrix of the net, has been extensively studied and solved effectively since \cite{colom1989}. Extending these techniques to High-Level Petri Nets (HLPN), in particular the automated verification and, when possible, computation of generative families of symbolic flows, has also been widely investigated. However, significant results exist only for restricted HLPN subclasses, mostly based on symbolic variants of classical elimination procedures (e.g., Gaussian or Gauss–Jordan elimination).
For example, \cite{Couvreur1991} gives an algorithm for generative families of flows in Commutative High Level Nets, a subclass of Colored Petri Nets (CPN) where color functions form a ring of commutative endomorphisms. This method was later extended to Unary Regular Nets and unary Predicate/Transition nets in \cite{CouvHad93}. The method proposed in \cite{Silva1991}, which relies on generalized inverses of (linearly extended) arc functions, is, in principle, applicable to general CPNs. However, it has never been automated due to the practical infeasibility of computing these generalized inverses in symbolic form. Since then, only a few substantial theoretical advances have appeared; among them, \cite{Eva2007} proposes a block-matrix–based algorithm for a subclass of Symmetric Nets (SN), namely Simple Well-formed Nets. Strict limitations on color functions (notably the lack of guards) make this technique unusable in many situations, including most examples in this paper.

\medskip

The balance of the paper: In section \ref{sec:SNdef}, we present a formal description of the ESN syntax, preceded by a brief introductory explanation. Section \ref{sec:eq-tr} establishes some fundamental properties of the ESN color functions language. In section \ref{sec:semiflow-def}, we introduce symbolic (semi)flows and present some basic properties. The central section \ref{sec:semiflow-ver} illustrates (semi)flow verification through a series of examples that demonstrate the key features of the ESN syntax. In section \ref{sec:basis}, we briefly discuss the calculation of a semiflow basis for ESN.
In section \ref{sec:cflow}, we also define a class of colored flows where the multiplicities of colors are ignored. Section \ref{sec:genInv} briefly addresses the verification of more general structural invariants for ESN. We conclude by summarizing our contributions and indicating directions for ongoing work.


%% file: SNdefinition.tex
\section{(Extended) Symmetric Nets}
\label{sec:SNdef}

Assuming only basic HLPN knowledge, we first give an informal overview of SN syntax using an example that will reappear later. We then formally present an extended version of SN, called ESN, focusing on the components needed for the invariant calculus. Stochastic parameters and priorities are omitted, as they are irrelevant in this context.

\begin{figure}[!h]
    \centering
    \includegraphics[width=\linewidth]{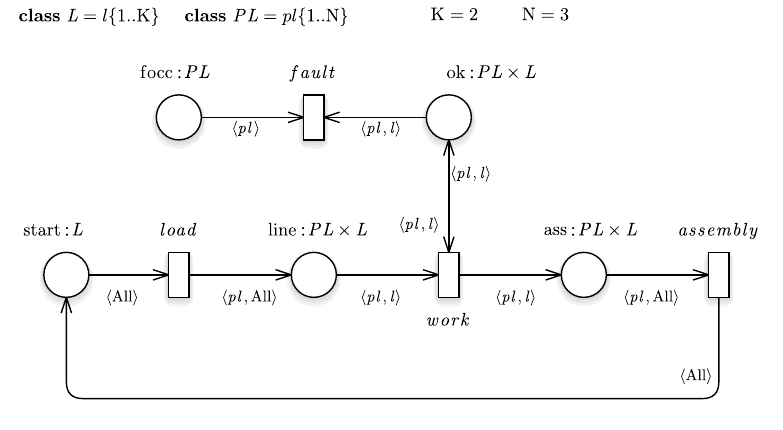}
    \caption{Distributed Production Line}
    \label{fig:PL}
\end{figure}

The domains of (E)SN nodes are Cartesian products of finite \emph{color classes}. These classes may be partitioned into static subclasses or, alternatively, ordered circularly. Arc inscriptions are sums of tuples of basic color functions, such as projections and constants denoting subclasses or entire classes.

Consider the SN depicted in Fig. \ref{fig:PL}, which models the core component of a distributed production system that, due to faults, experiences a controlled graceful degradation (we ignore adaptation aspects here). This SN is used as a reference in \cite{CAPRA-TCS2024,ICLP24}.
The system consists of $N$ parallel replicas of a production line (PL), each comprising $K$ similar interchangeable components that may fail. The PT net corresponding to the unfolding of this SN (for $K = 2, N = 2$) is reported in the appendix (Fig. \ref{fig:unfolding}).

The SN model thus uses two color classes, $L$ and $PL$, which are neither partitioned nor ordered, and whose cardinalities are $K$ and $N$, respectively (these are the parameters of the model).  
The places `\p{line}', `\p{ass}', and `\p{ok}' have domain $PL \times L$; place `\p{start}' has domain $L$, and place `\p{focc}' has domain $PL$. The places `\p{start}', `\p{line}', and `\p{ass}' represent the normal production cycle, whereas `\p{ok}' and `\p{focc}' capture the occurrence of a fault in a line.

The inferred domains for the transitions are $PL \times L$ for $work$ and $fault$, and $PL$ for $load$ and $assembly$. Consider, for example, the transition $work$: the function tuples on its incident arcs are given by $\tuple{pl,l}$, so a particular instance $\tuple{pl=c,l=c'}$ is enabled whenever a token $\tuple{c,c'}$ exists in both input places `\p{line}' and `\p{ok}'; In that case, when it fires, one such token is removed from `\p{line}' and inserted into `\p{ass}'.
The transition $load$ has domain $PL$: its input arc from `\p{start}' is labeled $\tuple{All}$, and its output arc to `\p{line}' is labeled $\tuple{pl,All}$; therefore, an instance $\tuple{pl=c}$ requires the presence of every color of $L$ in place `\p{start}'; firing this instance removes all these colors from `\p{start}' and produces in `\p{line}' the multiset $\sum_{c'\in L}\tuple{c,c'}$. Since the variable $pl$ appears free in the output arc, all instances enabled by $load$ are pairwise in conflict (see the corresponding transitions of the unfolding, shown in Fig. \ref{fig:unfolding}). An appropriate initial marking is given by $\vect{m}_0(\p{start}) = r \ \tuple{All}$, $\vect{m}_0(\p{focc}) = \tuple{All}$, and $\vect{m}_0(\p{ok}) = \tuple{All,All}$, with $r \in \Nat^+$. This implies, in approximate terms, that the production system is initialized with a number of row units that is a multiple of \(K\) (the number of lines in a production line, PL), and that all lines are operational. It can be readily demonstrated that, when faults occur in each PL, the system ultimately evolves into a deadlock state.

The other syntactical features, such as ordered/partitioned classes, guards, algebraic sums, and ESN-specific constructs, will be used in subsequent examples.

\subsection{Multiset functions and operations} 
Let $C$ be a non-empty set. A multiset in $C$ is a map $\vect{b}:C\rightarrow\Int$, 
where $\vect{b}(c)$ is the multiplicity of $c$;
its ``size'' is $|\vect{b}| := \sum_{c \in C} \vect{b}(c)$ (can be negative). $\Gbag{C}$ denotes (the set of) multisets and $\Bag{C}\subset\Gbag{C}$ those with natural multiplicities. If $\vect{b}\in\Bag{C}$ then its support is $\overline{\vect{b}}:=\{c\in C\mid \vect{b}(c) \neq 0\}$.
The empty multiset is denoted by $\vect{0}$.

Multiset operations are defined pointwise. Let $\vect{a}, \vect{b}\in\Gbag{C}$. For $\lambda\in\Int$, $\lambda\vect{a}(c)=\lambda *\vect{a}(c)$; binary operations $\mu$ 
move pointwise to $\vect{a}\,\mu\,\vect{b}$. 
For example $\vect{a} +\vect{b} (c) = \vect{a}(c) + \vect{b} (c) \ \forall c \in C  $; $\vect{a} -\vect{b} \equiv \vect{a} + (-1)\vect{b}$. The operators $\cap$ ($\equiv \ min$) and $\ominus$ are used in natural multisets, where $\vect{a}\ominus \vect{b}:=min(\vect{a}-\vect{b},\vect{0})$. The Cartesian product is also lifted: if $\vect{a}'\in\Gbag{C'}$, then $\tuple{\vect{a},\vect{a}'}\in\Gbag{C\times C'}$ and $\tuple{\vect{a},\vect{a}'}(c,c') := \vect{a}(c) * \vect{a}'(c')$. The relational operators are extended in an analogous way; for instance, $\vect{a} < \vect{b} := \bigwedge_{c \in C} \vect{a}(c) < \vect{b}(c)$.
Associative operations ($+(-), \cap, \tuple{\ldots}$) can be straightforwardly generalized to their n-ary counterparts.

Functional operators are obtained by evaluation. If $f,h\in\fset{A}{\Gbag{C}}$, $f'\in\fset{A}{\Gbag{C'}}$, and $\lambda\in\Int$, then $\lambda f$, $f+h$, $\tuple{f,f'}$, comparisons, $\cap$, $\ominus$, and support are defined pointwise.
For example, $f + h \in\fset{A}{\Gbag{C}} := f(a) + h(a)$, $\forall a \in A$. The product $\tuple{f,f'} \in \fset{A}{\Gbag{C\times C'}}$ (called a function tuple) is $\tuple{f,f'}(a) := \tuple{f(a),f'(a)}$, $\forall a$. Transposition is central: for $f\in\fset{A}{\Gbag{C}}$, $f^t\in\fset{C}{\Gbag{A}}$ is $f^t(c)(a) := f(a)(c)$, $\forall a, c$.

Functions are linearly extended to multiset arguments. If $f\in\fset{A}{\Gbag{B}}$, then, abusing the notation $f(\vect{a}):=\sum_{a\in A}\vect{a}(a)f(a)$; consequently, if $g \in\fset{B}{\Gbag{C}}$ then $g\circ f$ is defined by applying the linear extension of $g$ to $f(a)$. Guards between brackets are interpreted as functions $[p]\in\fset{A}{\Gbag{A}}$ that return $a$ ($1a$) when $p(a)$ is satisfied and $\vect{0}$ otherwise. For $f\in\fset{A}{\Gbag{B}}$, suffix and prefix guards are $f[p]:=f\circ[p]$ and $[p']f:=[p']\circ f$; prefix guards play a crucial role because they act as filters.

\subsection{Extended SN}
\label{subsec:ESN}
\newcommand{\pri}{\ensuremath \pi}
\begin{definition}[Extended SN] An ESN is a tuple:
$$ {\EuScript N} = (P,T,\ColClasses,\cdom, I,O,H,\Guards,\mk{m}_0)$$
where $P$ and $T$ are finite, nonempty, disjoint sets holding the places and the transitions; $\ColClasses$ is the set of color classes; ($\neq \emptyset)$; ${\cdom}$ assigns a color domain to each $v \in P \cup T$; $I, O$ and $H$ are families of arc functions defined for each $(p,t) \in P \times T$;
\Guards\ assigns each $t \in T$ a guard;
a marking $\mk{m}$ is a $P$-vector such that $\mk{m}(p) \in  Bag[\cdom(p)]$; $\mk{m}_0$ denotes the initial marking.
\end{definition}

\begin{itemize}
\item 
The color classes in $\ColClasses = \{C_i, i = 1 \, \ldots n\}$ are finite pairwise disjoint sets of elements called {\it colors}; each class $C_i$ can be {\it partitioned} into static subclasses $\{C_{i,j}\}_{j:1\ldots s_i}$ or alternately {\it ordered} circularly; if $C_i$ is ordered, then $!^m(c)$ ($m \in \Int$) denotes the successor/predecessor $m^{th}$  of $c \in C_i$ ($mod_{|C_i|}$).
\item 
 A color domain $\cdom_{\ColClasses}$ is a Cartesian product $\bigotimes_{i= 1}^n C_i^{e_i}$, where $e_i \in \Nat$ denotes the repetitions of class $C_i$ ($C_i^0$ is the identity); $|\cdom_{\ColClasses}| := \sum_{i= 1}^n e_i$. We define a partial order in the color domains: $\cdom_{\ColClasses} \leq \cdom'_{\ColClasses}$ if and only if $\forall i, \, e_i \leq e_i'$.
\item 
The color domain $\cdom(p)$ of a place $p$ defines the shape of tokens (color tuples) that can exist in $p$; the marking of $p$ is a multiset in ${\cdom}(p)$;
\item 
The color domain $\cdom(t)$ of a transition $t$ defines the instances of $t$; $\cdom(t)$ can be inferred from the inscriptions of the arcs surrounding $t$;
\item 
The guard $\Phi(t)$ of a transition $t$ is a function $\cdom(t) \rightarrow Bool$ syntactically expressed through a {\em standard predicate} (defined below); the color domain of $t$ is \emph{restricted} to $\{c \in  \cdom(t) | \Phi(t) = true\}$. For simplicity, the symbol \cdom(t) will denote such a restricted domain; $c \in \cdom(t)$ will also be denoted $(t,c)$; 

\item 
Let $W \in \{I,O, H\}$. An arc function $W(p,t) \in  \fset{\cdom(t)}{Bag[\cdom(p)]}$
belongs to the language $\lang_{\cdom(t),\cdom(p)}$ that is formally defined as:
\begin{equation}
\label{eq:arcfun}
 \lang_{\cdom_{\ColClasses},\cdom'_{\ColClasses}} := \{ F := \sum_i \lambda_i . [g_i'] T_i [g_i],\,\lambda_i \in \Int \}  \cup \{0_{\cdom_{\ColClasses},\cdom'_{\ColClasses}}\}
\end{equation}

$T_i$ is a \emph{function tuple} $\langle f_1, \ldots , f_{|\cdom'_{\ColClasses}|} \rangle$, and $g_i$ and $g_i'$ are \emph{standard predicates} defined in $\cdom_{\ColClasses}$ and $\cdom'_{\ColClasses}$, respectively.
Each $f_l$ in $T_i$, called \emph{a class function}, is a map  $\cdom_{\ColClasses}\rightarrow Bag^*[C_{h}]$, where $C_{h} \in \Sigma$ occurs at the position $l$ in $\cdom'_{\ColClasses}$. For a certain $\cdom_{\ColClasses}$ and $C_h \in \ColClasses$,
a class function is expressed as

\begin{equation}
\label{eq:classfun}
f \in \fset{\cdom_{\ColClasses}}{\Gbag{C_h}} := \sum_{k} \alpha_k. \epsilon_k, \,\, \alpha_k \in \Int
\end{equation}

\noindent $\epsilon_k \in \fset{\cdom_{\ColClasses}}{\Bag{C_h}}$ can be:
a projection $x_h^j$, $1 \leq j \leq e_h$ ($x_h^j(c)$ results in the $j^{th}$ color of class $C_h$ in the color tuple $c$), possibly preceded by $!^m$ if $C_h$ is ordered; a constant function $All_h$, which maps to $C_h$ (seen as a multiset); a constant function $All_{h,q}$, which maps to $C_{h,q}$, if $C_h$ is partitioned.

$g_i \in \fset{\cdom_{\ColClasses}}{Bool}$ is defined in terms of the basic clauses: $x_h^j = (\neq) x_h^y$, $x_h^j = (\neq) !^m x_h^y$ if $C_h$ is ordered and $x_h^j \in C_{h,q}$ if $C_h$ is partitioned.

Note that in (\ref{eq:arcfun}) and (\ref{eq:classfun}) there can be negative terms, but arc-functions must always map to $\Bag{\cdom(p})$.
\end{itemize}

\paragraph{Semantics} A transition instance $(t,c)$ is \textit{enabled} in the marking $\mk{m}$ if: 
$$\forall p \in P,  \Inp{p,t}(c) \leq \mk{m}(p) \wedge (\Inh{p,t}(c)  = 0 \vee \Inh{p,t}(c) > \mk{m}(p))$$
If $(t,c)$ is \textit{enabled} in $\mk{m}$, it can fire, leading to 
$\mk{m}' := \mk{m} + \Out{p,t}(c) - \Inp{p,t}(c)$ (denoted
$\mk{m} [(t,c) > \mk{m}'$).

\paragraph{Conventions} In the following, we use capital words, e.g., $C, L, PL$, to denote basic color classes in $\Sigma$, and the corresponding lowercase words to denote projections (variables) of the respective class (with a subscript if that class is repeated in a color domain): for example, if $\cdom_\ColClasses = L^2\times PL$, then the projection symbols in that domain are $\{l_1,l_2,pl\}$.

\vspace{-5pt}
\paragraph{ESN Arc function example} Consider the following function, which appears in Figure \ref{fig:broadcast}, whose semantics will be elucidated in Section \ref{sec:semiflow-ver}. This function cannot be represented within the original SN syntax.
 $$\tuple{All - n, n, ack} + [n_1 \neq n_2]\, \tuple{All - n,All -  n, ack} \in \fset{N}{N^2 \times L}$$
 This function occurs on the input arc from the place $\p{delivered}$, whose domain is $N^2 \times L$, to the transition $complete$, whose inferred domain is simply $N$. The color class $L$ is divided into $L_1 := \{data\}$ and $L_2 := \{ack\}$; here, the symbol $ack$ stands for $All_{L_2}$. The color class $N$ consists of $K$ colors (a model parameter) that are not distinguishable individually. For a given color $c \in N$, the function returns the (multi)set containing all triples $\tuple{c',c,ack}$ for every $c' \neq c$ (corresponding to the first term), together with all triples $\tuple{c'',c''',ack}$ for arbitrary $c''$, $c'''$ such that $c'' \neq c$, $c''' \neq c$, and $c''' \neq c''$ (the second term).
A central aspect is determining the size of the images of the arc functions, a nontrivial task due to possible filters, for which chromatic polynomials of graphs are used \cite{pnse2021}.

\section{Base Properties Of Arc Functions And Equivalence Transformations}
\label{sec:eq-tr}
The symbol $\lang$ shall henceforth refer to the language of arc functions (\ref{eq:arcfun}), whose (co-)domains are built over the set of color classes \ColClasses\ associated with a given ESN. 
In this section, we outline the main algebraic properties of $\lang$ and introduce basic structural transformations that support its symbolic manipulation.

\begin{proposition}
$\lang$ is closed under the base functional operations; composition $(\circ)$, transposition $({}^t)$, algebraic sum $(+(-))$, difference $(\ominus)$ and intersection $(\cap)$.
\end{proposition}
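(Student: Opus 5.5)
The plan is to reduce every operation to a normal form built from \emph{elementary} terms and check closure on those terms. The key preliminary step is a disjoint normal form: every $F\in\lang_{\cdom_{\ColClasses},\cdom'_{\ColClasses}}$ can be rewritten as a sum $\sum_i \lambda_i [g_i']T_i[g_i]$ in which each $T_i$ is a tuple of \emph{elementary} class functions (single projections $!^m x_h^j$, or constants $All_{h,q}$) and the filtered terms are pairwise disjoint with multiplicity-one images.

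To obtain this form, first expand each class function $\sum_k\alpha_k\epsilon_k$ inside a tuple by multilinearity of $\tuple{\ldots}$. This yields a $\Int$-combination of tuples of elementary functions. Next, make the images pairwise disjoint by splitting suffix guards into complete conjunctions of basic clauses (equalities, inequalities and subclass memberships among projections). Under a fixed complete clause, each elementary tuple either has a singleton-like image or is a product of constants and projections, possibly minus equal elements. We use the identity $All_h - x_h^j$ together with prefix filters $[x'\neq x']$ to express ``all colors distinct from these'' as a filtered tuple, exactly as in the broadcast example. This normal form is the one asserted to exist in Section~\ref{sec:semiflow-ver}, and I would take it as a lemma proved by induction on the number of basic clauses.

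With the normal form in hand, the easy operations follow quickly. Algebraic sum and scalar multiplication are immediate from the definition~(\ref{eq:arcfun}), which already allows $\Int$ coefficients. Transposition is linear, so it suffices to transpose an elementary term $[g']T[g]$. Its graph is the set of pairs $(a,b)$ satisfying $g(a)\wedge g'(b)$ together with the equations $b_l = f_l(a)$ when $f_l$ is a projection and $b_l\in C_{h,q}$ when $f_l$ is a constant. Reading this set in the other direction expresses each coordinate of $a$ either as a projection of $b$ (possibly via $!^{-m}$) or, when unconstrained, as $All$ or a subclass constant. The guards swap roles, so the transpose is $[g]T'[g'\wedge\text{membership clauses}]$, which lies in $\lang$. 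Composition is reduced by bilinearity to $([g_2']T_2[g_2])\circ([g_1']T_1[g_1])$. On each disjoint piece, $T_1(a)$ is a set described by a filtered tuple, so substituting into $g_2$ and $T_2$ coordinatewise gives either a tuple in terms of $a$ or a sum over free constant coordinates. That sum contributes a multiplicity equal to the number of admissible colors, and this number is constant on each complete clause, with chromatic-polynomial counting as in \cite{pnse2021}. Hence the composite is again a finite $\Int$-combination of filtered tuples.

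The operations $\cap$ and $\ominus$ are pointwise min-type operations and are not linear. For these I would first put both operands into the \emph{same} common refinement: a family of pairwise disjoint filtered elementary tuples, each mapping to sets, obtained by intersecting the guard partitions of both operands. On this refinement each function takes the form $\sum_j \lambda_j E_j$ with $E_j$ disjoint, and then $\cap$ and $\ominus$ act componentwise on the coefficients, giving $\sum_j \min(\lambda_j,\mu_j)E_j$ and similarly for $\ominus$. The intersection of two elementary tuples is itself an elementary filtered tuple, namely the conjunction of their coordinate equations pushed into prefix guards.

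The main obstacle I expect is this common-refinement step. It requires showing that the intersection and the complement of the images of filtered elementary tuples are again finite disjoint unions of filtered elementary tuples, especially with ordered classes, where $!^m$ clauses interact. The construction must also work uniformly for all class cardinalities, i.e.\ the resulting multiplicities must not depend on $|C_h|$ in a way that escapes the syntax. I would attempt it by induction on $|\cdom'_{\ColClasses}|$, refining coordinate by coordinate.
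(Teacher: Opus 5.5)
The paper gives no argument for this proposition beyond attributing closure under composition to \cite{pnse2021}, so your sketch has to stand on its own. The sum step does. The transposition step essentially does, but the new suffix guard over the old output coordinates $y_l$ also needs the clause $y_l=y_{l'}$ (resp.\ $y_{l'}=!^{m'-m}y_l$) whenever one input coordinate is projected to several positions, e.g.\ $\tuple{x,x}^t=\tuple{y_1}[y_1=y_2]$.

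The $\cap$/$\ominus$ step has a genuine gap. It rests on a decomposition into pairwise disjoint filtered tuples of \emph{elementary} class functions, and this does not exist in general. Prefix guards cannot mention the argument, so for a plain class $C$ with $|C|\geq 2$ the function $\tuple{All}+\tuple{x}\in\lang_{C,C}$ has the disjoint form $2\tuple{x}+\tuple{All-x}$ but no elementary one. (The normal form at the end of Section~\ref{sec:eq-tr} only asks for set-valued guarded tuples, which may contain $All-x$.) So the common refinement you want cannot be built as stated, and you leave it open anyway.

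It closes at once without disjoint elementary pieces. Expand both operands multilinearly over one family $E_1,\dots,E_m$ of set-valued elementary filtered tuples, $F=\sum_i\lambda_iE_i$ and $G=\sum_i\mu_iE_i$. For $\emptyset\neq S\subseteq\{1,\dots,m\}$ let $A_S=\sum_{S'\supseteq S}(-1)^{|S'\setminus S|}\bigcap_{i\in S'}E_i$; this is the indicator of lying in exactly the $E_i$ with $i\in S$. A pointwise intersection of elementary filtered tuples is again one (or $0$): conjoin the guards and add the coincidence clauses $x^j=x^{j'}$, $x^j=!^m x^{j'}$, $x^j\in C_{h,q}$ to the \emph{suffix} guard, not the prefix guard as you wrote. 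Hence $A_S\in\lang$, and $F\cap G=\sum_S\min(\lambda_S,\mu_S)A_S$ with $\lambda_S=\sum_{i\in S}\lambda_i$ and $\mu_S=\sum_{i\in S}\mu_i$; $\ominus$ is handled the same way. Negative coefficients make complements unnecessary. No uniformity in $|C_h|$ is required either, since the ESN fixes the class sizes.

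The composition step is also wrong as written. Disjoint pieces do not make the number of admissible values of a summed-out coordinate constant on complete clauses over the argument. That number depends on how the exposed output coordinates relate to the input ones, and the inner prefix guard must be substituted as well. For example, with output coordinates $y_1,y_2$, the function $F=[y_1\neq y_2]\tuple{All,All-x}\in\lang_{C,C^2}$ is a single set-valued piece. Yet with $G=\tuple{y_1}$ one gets $G\circ F=(n-1)\tuple{x}+(n-2)\tuple{All-x}$, where $n=|C|$.

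Your argument becomes correct once the inner function is refined into pieces that fix every equality, subclass and offset relation among input and output coordinates. These are the indicators of orbits of $\cdom\times\cdom'$ under the group $\mathcal{S}$ of families $(\sigma_h)_h$ acting coordinatewise, where $\sigma_h$ is a permutation of $C_h$ fixing each static subclass (a rotation if $C_h$ is ordered).

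Pushing this one step further gives a complete short proof of the whole proposition: for fixed class sizes, $\lang_{\cdom,\cdom'}$ is exactly the set of $F$ with $F(\sigma a)=\sigma F(a)$ for all $\sigma\in\mathcal{S}$. One direction holds because projections (with $!^m$), constants and basic clauses commute with every $\sigma$. Conversely, an equivariant $F$ is a $\Int$-combination of orbit indicators, and each of these is a single filtered tuple, built as follows:
\begin{itemize}
\item an output position equal to an input position gets that projection;
\item a fresh position in $C_{h,q}$ gets $All_{h,q}-\sum_{j\in R}x^j$, where $R$ indexes one input coordinate per distinct input value in $C_{h,q}$;
\item the input pattern goes into the suffix guard, and the pattern among fresh positions into the prefix guard;
\item for an ordered class, use $!^m$-offsets from one coordinate.
\end{itemize}
Composition, transposition, sum, $\cap$ and $\ominus$ all visibly preserve equivariance. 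What this route does not give, unlike the symbolic calculus of \cite{pnse2021} implemented in \SNex, is normal forms that are parametric in the class sizes.
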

The same property holds for the derived language $\overline{\lang} := \{\overline{F} \mid F \in \lang\}$, consisting of functions in $\fset{\cdom_\ColClasses}{2^{\cdom'_\ColClasses}}$, where functional operators apply to sets. The base structural dependencies among the SN nodes can be expressed symbolically in $\overline{\lang}$, whose syntax matches that of $\lang$, except that class functions can use the operator $\cap$, the scalars are restricted to $\Nat$, and only one difference symbol ($\ominus$) is allowed.

It may be convenient to translate the SN annotations into "equivalent" forms that meet some requirements. In particular, (E)SN arc functions should be assumed to be of constant-size:
\begin{definition}
\label{def:c-size}
$F \in \fset{A}{\Gbag{B}}$ is constant-size if $\exists k \in \Int \, \forall a, |F(a)| = k$.   
\end{definition}

\begin{proposition}
An SN transition $t$ can be split into $t'_1,\ldots,t'_k$ that adopt uniquely constant size arc functions,
such that $\cdom(t) = \bigcup_{i:1}^k \cdom(t'_i)$, $\cdom(t'_i)$ are pair-wise disjoint, and $\mk{m} [(t,c) > \mk{m}'$ if and only if $\mk{m} [(t'_i,c) > \mk{m}'$ for some $t'_i$.   
\end{proposition}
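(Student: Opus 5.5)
The plan is to build the split explicitly, one arc function at a time, by partitioning $\cdom(t)$ into finitely many regions on which every function $W(p,t)$, for $W \in \{I,O,H\}$, has constant size. The basic tool is the fact, already used in the paper (see the remark on chromatic polynomials \cite{pnse2021}), that the image size of a guarded function tuple depends only on the equality, successor and subclass-membership pattern of the colors in the argument. Concretely, I would use the finite set of \emph{standard predicates} over $\cdom(t)$ that are \emph{complete}. Each such predicate fixes, for every pair of projections of the same class, whether they are equal or different (or, for ordered classes, their relative offset, as far as it appears in the function), and it fixes the static subclass of each projection. There are finitely many such complete predicates $g_1,\ldots,g_k$. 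After removing the unsatisfiable ones, they are pairwise mutually exclusive, and their disjunction is equivalent to $\Guards(t)$.

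\textbf{Step 1.} I would show that every class function $f=\sum_k \alpha_k\epsilon_k$ has constant size on the set $\{c \mid g_i(c)\}$. The size of a projection is $1$. The size of $All_h$ is $|C_h|$, and the size of $All_{h,q}$ is $|C_{h,q}$|. The multiplicity pattern produced by summing projections and constants is also determined by $g_i$: for example, $|x^1 - x^2|$ and the overlap of $x^1$ with $All_{h,q}$ are fixed once we know whether $x^1=x^2$ and which subclass $x^1$ lies in.

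\textbf{Step 2.} I would lift this to a guarded tuple $[g']\tuple{f_1,\ldots,f_m}[g]$. The suffix guard $g$ is either implied by $g_i$ or contradicted by it, so on each region the tuple is either the tuple itself or $0$. The size of the tuple is the product $\prod_l |f_l(c)|$.

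\textbf{Step 3 (main obstacle).} The prefix filter $g'$ is the hard case. Here the size is the number of tuples in the Cartesian product of the supports that satisfy $g'$. This count is a chromatic-polynomial-like expression, and it depends on how the argument colors relate to each other and to subclasses, that is, exactly on the information that $g_i$ fixes. I would argue this through symmetry rather than by computing the count. Any two colors $c,c' \in \cdom(t)$ that satisfy the same complete predicate $g_i$ are related by a color permutation $\sigma$ that preserves static subclasses (and the circular order, if there is one). SN functions are equivariant, $F(\sigma c)=\sigma F(c)$, and $\sigma$ preserves the size of a multiset, so $|F(c)|=|F(c')|$. This argument needs the complete predicate to refine the symmetry orbits of $\cdom(t)$. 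For ordered classes, the predicate must therefore record the offsets $!^m$ up to the maximal offset occurring in the functions, and I would need to check that this still gives finitely many predicates.

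\textbf{Step 4.} I would define $t'_i$ as a copy of $t$ with guard $\Guards(t)\wedge g_i$ and the same arc functions. On this restricted domain, each arc function is constant-size. Because the $g_i$ are disjoint and jointly cover $\Guards(t)$, the domains $\cdom(t'_i)$ partition $\cdom(t)$. Enabling and firing of $(t'_i,c)$ use the same $I$, $O$ and $H$ evaluated at the same $c$ as $(t,c)$. This gives $\mk{m}[(t,c)>\mk{m}'$ if and only if $\mk{m}[(t'_i,c)>\mk{m}'$ for the unique $i$ with $g_i(c)$.
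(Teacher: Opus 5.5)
\paragraph{Overall assessment.}
Your strategy is sound and goes further than the paper. You split $t$ along the atoms of the Boolean algebra of basic clauses, and you obtain constant size on each atom from the equivariance of $\lang$-functions under admissible color permutations. The paper does not prove the proposition at all. It only gives the example of Fig.~\ref{fig:tequiv}, which splits $t$ along the suffix guard of one term, and it leaves the filtered case implicitly to the chromatic-polynomial size computation of \cite{pnse2021}.

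For unordered, possibly partitioned, classes your Step 3 is correct. An atom fixing the equality pattern and the subclass of every projection is a single orbit of the subclass-preserving permutations, because a partial injection inside each subclass extends to a bijection.

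Steps 1--2 are also easier than you make them. The paper's size is signed and additive, so $|f(c)|=\sum_k\alpha_k|\epsilon_k(c)|$ is constant for every class function, and an unfiltered tuple has size $\prod_l|f_l(c)|$. Only suffix guards and prefix filters can break constancy.

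\paragraph{The gap: ordered classes.}
The step that fails as written is the treatment of ordered classes. There the admissible permutations are only the rotations. The orbit of a color of $\cdom(t)$ is therefore fixed by the full offset in $\mathbb{Z}_{|C|}$ of each same-class projection relative to the first one.

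A complete predicate that records offsets only up to the largest $!^m$ in the arc functions does not have this property. If only offsets in $\{-1,0,1\}$ are tracked, it puts $x^2 = !^5 x^1$ and $x^2 = !^7 x^1$ in the same atom. These lie in different orbits, so equivariance gives nothing and Step 3 collapses.

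Finiteness, which is what you worried about, is not the issue. Let the complete predicates fix every offset, $x^j = !^m x^1$ with $0\le m<|C|$. These are standard predicates, there are finitely many, and their atoms are exactly the rotation orbits. With this change the rest of your argument (Steps 2 and 4) goes through. The price is that the number of transitions $t'_i$ grows with $|C|$.

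If you want a split independent of $|C|$, you must drop the symmetry argument and count directly. In that case the relevant offsets are not the maximal offset in the functions but sums of tuple and filter offsets. For example, take $\cdom(t)=C\times C$ with projections $x,y$, and codomain projections $z_1,z_2$. The filtered tuple $[z_1 = !^3 z_2]\tuple{x, All - !^3 y}$ has size $1$ if $x\neq !^6 y$ and size $0$ otherwise. Its maximal offset is $3$, but the relevant offset is $6$.
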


\noindent An example is presented in Fig. \ref{fig:tequiv}: The function $\Out{\p{p}_1,t}$ has a size that depends on the guard of one term; therefore, $t$ is split (on the right) so that the functions satisfy the constant-size requirement.

\begin{figure}
    \centering
    \includegraphics[width=0.7\linewidth]{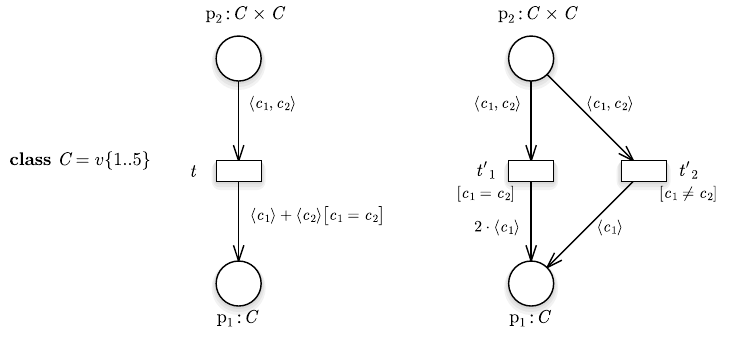}
    \caption{Transformation into constant-size functions}
    \label{fig:tequiv}
\end{figure}

We can also express any $W \in \lang$ as $\sum_i \lambda_i F_i$, where the terms $F_i$ are pair-wise disjoint and each $F_i$  produces constant-size multisets in which all element multiplicities are equal to one.
For example, after performing the transformation, we can readily confirm that the codomain of $W$ is $\Bag{B}$ by checking that all coefficients $\lambda_i$ belong to $\Nat$.

A concluding observation regards the equivalence test. In general, expressions $W$ in $\lang$ do not admit a canonical representative, although their constituent components $F_i$ do. However, for any $W, W' \in \lang_{\cdom_{\ColClasses},\cdom'_{\ColClasses}}$, the equivalence $W \equiv W'$ can be decided syntactically by examining $W - W'$ and $W' - W$. In fact, our symbolic calculus ensures that null expressions have a reduced syntactic form $0$.